\documentclass{article}

\usepackage[margin=1in]{geometry}

\usepackage{url}
\usepackage{hyperref}
\usepackage[utf8]{inputenc}
\usepackage[small]{caption}
\usepackage{graphicx}
\usepackage{amsmath, amssymb}
\usepackage{amsthm}
\usepackage{booktabs}
\usepackage{algorithm}
\usepackage{algorithmic}

\newtheorem{lemma}{Lemma}
\theoremstyle{definition}
\newtheorem{definition}{Definition}

\usepackage{xcolor,nicefrac,physics,thm-restate,tcolorbox,xspace}


\usepackage[capitalize]{cleveref}

\DeclareMathOperator*{\E}{\mathbb{E}}

\newcommand{\Bern}{\mathrm{Bernoulli}}
\newcommand{\strategy}{\hyperref[str:RRB]{\color{black}Randomized Robust Bidding}\xspace}
\newcommand{\RRB}{\hyperref[str:RRB]{\color{black}RRB}\xspace}

\author{
David X. Lin$^1$\footnote{Equal contribution, order randomized.}\thanks{Supported by NSF grant ECCS-1847393.}
\and
Daniel Hall$^1$\footnotemark[1]
\and
Giannis Fikioris$^1$\thanks{Supported in part by the Google PhD Fellowship, the Onassis Foundation -- Scholarship ID: F ZS 068-1/2022-2023, and ONR MURI grant N000142412742.}\and
Siddhartha Banerjee$^1$\thanks{Supported in part by AFOSR grant FA9550-23-1-0068, ARO MURI grant W911NF-19-1-0217, and NSF grants ECCS-1847393 and CNS-195599.}\and
\'Eva Tardos$^1$\thanks{Supported in part by AFOSR grant FA9550-23-1-0410, AFOSR grant FA9550-231-0068, and ONR MURI grant N000142412742.}
\\[10pt]
$^1$Cornell University\\[10pt]
\{deh275, dxl2\}@cornell.edu,
gfikioris@cs.cornell.edu,
\{sbanerjee, eva.tardos\}@cornell.edu
}

\newif\ifarxiv
\arxivtrue

\title{Online Resource Sharing: Better Robust Guarantees via Randomized Strategies}
\begin{document}
\maketitle

\begin{abstract}
    We study the problem of fair online resource allocation via non-monetary mechanisms, where multiple agents repeatedly share a resource without monetary transfers.
Previous work has shown that every agent can guarantee $1/2$ of their ideal utility (the highest achievable utility given their fair share of resources) robustly, i.e., under arbitrary behavior by the other agents.
While this $1/2$-robustness guarantee has now been established under very different mechanisms, including pseudo-markets and dynamic max-min allocation, improving on it has appeared difficult.

In this work, we obtain the first significant improvement on the robustness of online resource sharing. In more detail, we consider the widely-studied repeated first-price auction with artificial currencies.
Our main contribution is to show that a simple randomized bidding strategy can guarantee each agent a $2 - \sqrt 2 \approx 0.59$ fraction of her ideal utility, irrespective of others' bids.
Specifically, our strategy requires each agent with fair share $\alpha$ to use a uniformly distributed bid whenever her value is in the top $\alpha$-quantile of her value distribution. Our work almost closes the gap to the known $1 - 1/e \approx 0.63$ hardness for robust resource sharing; we also show that any static (i.e., budget independent) bidding policy cannot guarantee more than a $0.6$-fraction of the ideal utility, showing our technique is almost tight.
\end{abstract}

\section{Introduction}

There are many settings where multiple self-interested agents share a resource controlled by a principal.
Specifically, we consider the problem where the resource is repeatedly allocated to one of the agents over a long time horizon.
For example, consider multiple scientific labs at a university sharing a computer cluster.
Each lab wants to run experiments in every time slot, but only one lab can use the cluster at a time.
The system administrator must decide who uses the cluster in each time slot.
This has to be done in a ``fair'' way so that each lab is satisfied with the resulting allocation.
In addition, the allocation has to happen without requiring monetary payments from the labs.

The allocation model and mechanism that we study were first introduced by \cite{gorokh2019remarkable} and subsequently also studied in \cite{fikioris2023online,banerjee2023robust}.
Every agent is endowed with a \textit{fair share} of the resource.
Roughly speaking, an agent's fair share is their intrinsic right to enjoy a fraction of the resource.
Letting different agents have different fair shares encapsulates many realistic real-world scenarios.
For example, it may be natural for a larger research group within a university to get access to a computer cluster more often than a smaller group.
As a benchmark, this line of work defines and uses \textit{ideal utility}.
Roughly speaking, an agent's ideal utility is the maximum per-round utility that she can obtain if she is restricted to obtaining only her fair share of the resource.

\cite{gorokh2019remarkable} introduce a simple non-monetary mechanism in which each agent is endowed with an amount of artificial currency proportional to their fair share, and at each time, the item is allocated according to a first-price auction using the artificial currency.
Assuming that each agent's value for the item is drawn independently from a fixed distribution each round, they show that in this mechanism, each agent can guarantee a $\frac 1 2-o(1)$ fraction of her ideal utility robustly, which means that each agent can make this guarantee regardless of the behavior of the other agents and their value distribution.
\cite{fikioris2023online} study the same problem under a different mechanism and get the same result.

\paragraph{Our results}
We use the model and mechanism of \cite{gorokh2019remarkable} and offer a much more detailed analysis of the achievable robust guarantees.
Our main result is an improved bidding strategy that any agent can follow to guarantee a $2-\sqrt2-o(1)\approx 0.59$ fraction of her ideal utility robustly (\cref{thm:lowerBound}).
We emphasize that since this result holds under arbitrary behavior by other agents, the same utility guarantee can be made under any equilibrium.
This lower bound is close to the upper bound of $1 - \frac{1}{e} \approx 0.63$ that \cite{fikioris2023online} give.
Our proposed strategy, \textit{\strategy}, is very simple.
While budget suffices, an agent's bid is sampled from a certain uniform distribution if her value is in the top $\alpha$-quantile of her value distribution, where $\alpha$ is her fair share.

This randomization in bidding is essential for improving the $\frac 1 2$ ideal utility guarantee.
For such a guarantee \cite{banerjee2023robust,gorokh2019remarkable} suggest that an agent should bid a fixed constant whenever her value is sufficiently high.
Under such simplistic bidding, the other agents know exactly how much they need to bid to beat agent $i$.
It is not hard to show that no fixed bidding can robustly guarantee a better than $\frac 1 2$ of the ideal utility.
The same holds for the $\frac 1 2$ guarantee of \cite{fikioris2023online}: their much simpler mechanism only allows requesting the resource or not, equating to a fixed bid for pseudo-markets.

Given the above observations about fixed bidding, a more involved strategy is essential for stronger results.
While a uniform distribution is arguably the simplest continuous distribution, it enjoys the following property.
It is the distribution that minimizes spending, subject to achieving the same fraction of ideal utility when the other agents are bidding $b$, for a range of values $b \ge 1$.
Carefully optimizing this uniform distribution, we get the $(2-\sqrt 2)$ guarantee of \cref{thm:lowerBound} when its support is $[0, 1 + \sqrt 2]$.

A main component of our utility lower bound (and subsequent results) is \cref{lem:bernoullivaluedistributionisworstcase}, which greatly simplifies the problem of getting robust guarantees.
Specifically, we show that for any $\beta \le 1$, if an agent can guarantee a $\beta$ fraction of her ideal utility when her value distribution is Bernoulli with mean equal to her fair share, then she can guarantee the same for any value distribution.
In addition, our reduction works for any mechanism, showing that this Bernoulli value distribution is the worst-case for this problem.

We show that our robust utility guarantee is almost tight in the following sense.
Suppose the agent does have the worst-case Bernoulli distribution.
If the agent uses any strategy that involves bidding from the same fixed distribution every round in which she has value $1$, she cannot robustly guarantee more than a $0.6$ fraction of their ideal utility (\cref{thm:fixeddistributionupperbound}).
This result also showcases the strength of the uniform distribution we used before.
If the probability that the agent bids more than $b$ when she has value $1$ is not at least $(1 - \beta) b$ at every point $b\geq 1$ (i.e., above a certain uniform distribution), then the other agents can bid $b$ to make the agent get less than $\beta$ fraction of her ideal utility.
In other words, we prove that if the agent's bidding is not aggressive enough, she cannot guarantee a $\beta$ fraction of her ideal utility.
On the other hand, by bidding too much, the agent will run out of budget.
We combine these bounds to show that an agent cannot obtain a guarantee of more than a $0.6$ fraction of their ideal utility regardless of the choice of bidding distribution.

Our final theoretical result is a bidding strategy the agents other than $i$ can follow that ensures agent $i$ receives at most a $(1-\frac 1 e)$ fraction of her ideal utility under the previous worst-case Bernoulli value distribution (\cref{thm:arbitrarystrategyupperbound}).
Unlike the previous result, this is under arbitrary strategies that agent $i$ can follow, such as time-varying strategies.
Due to \cite{fikioris2023online}, it is already known that there is no mechanism can guarantee every agent a $(1 - \frac 1 e)$ fraction of her ideal utility but this is the first explicit strategy that does this.
We believe that the bidding distribution we provide for the other agents might be of interest to close the gap between the upper and lower bounds.

Finally, in \cref{sec:experiments}, we provide empirical evidence of how our \strategy strategy is superior to the fixed bidding of previous work.
In particular, we show that when each of $n$ agents with equal fair shares follows the \strategy strategy, then every agent ends up with a $1-(1-1/n)^n$ fraction of their ideal utility (note that $1-(1 - 1/n)^n \to 1 - 1/e$ as $n \to \infty$).
This fraction is the theoretical maximum \textit{any allocation} can guarantee when there are $n$ agents, even if it has knowledge of the agents' realized values.
This is much higher than the $1/2$ fraction of ideal utility that agents get when following the fixed bidding strategy of \cite{gorokh2017,banerjee2023robust}.

\paragraph{Related Work}

Our work follows in a long line of works that consider the problem of repeated allocation of resources without money.
The interest in studying such mechanisms stems from its application in many real-world settings, and indeed, the theory has benefited from and influenced successful deployments for course allocation~\cite{budish2017course}, food banks~\cite{walsh2014allocation,prendergast2022allocation} and cloud computing~\cite{dawson2013reserving,vasudevan2016customizable}.

The particular model we consider, with a single indivisible item per round, and agents with random valuations across rounds, was first considered in the work of~\cite{guo2010}; however the core idea of `linking' multiple allocations to incentivize truthful reporting without money goes back to the seminal work of~\cite{jackson}. 
These mechanisms, and a long line of follow-up work~\cite{cavallo2014incentive,gorokh2017,balseiro2019multiagent,blanchard2024near}, provide only Bayes-Nash equilibrium guarantees, and moreover, the mechanisms need to know the value distributions beforehand. None of these mechanisms, however, can provide any guarantee under non-equilibrium actions by other agents.

A more recent line of work, starting from \cite{gorokh2019remarkable}, considers the same setting, but focuses on \emph{robust individual-level} guarantees: the aim now is to guarantee each agent some minimum utility irrespective of how other agents behave. 
\cite{gorokh2019remarkable} propose the repeated first-price pseudo-market that we also use in our work, and show that every agent can guarantee a $1/2$ fraction of her ideal utility robustly.
Since then, this $1/2$-robustness guarantee has been re-obtained using very different mechanisms: \cite{banerjee2023robust} give a simple argument to get this guarantee using a repeated first-price auction with a reserve (their main focus is to extend the robustness guarantees to reusable resources, i.e., resources that an agent might want for multiple consecutive rounds), and
\cite{fikioris2023online} show how to get it using a very different non-market based mechanism called Dynamic Max-Min Fairness  (while also showing how to get robust guarantees that are distribution-specific and extend to values that are correlated across time).
Both these latter works also suggest that the $1/2$-robustness guarantee is essentially tight under their respective approaches, which is far below the best upper-bound of $1-1/e$ we discuss above. The question of whether one can obtain stronger robustness guarantees for the basic single-item setting has, however, remained open until this work.

Further afield, there are problems where agents' values are known in advance.
\cite{babaioff2021fair,babaioff2022best} study fair resource allocation when the agents' values are adversarially picked have to be allocated simultaneously.
Due to this information structure, their utility benchmarks are much weaker than ours.
For example, for an agent who has positive value $1$ for only $T/n$ out of $T$ items, they guarantee $\Theta(T/n^2)$ utility as an adversary can ``block'' the agent for only items of positive value.
In contrast, for an agent with Bernoulli$(1/n)$ values, we guarantee $\approx0.59 T/n$ utility.
\cite{avni2018infinite,lazarus1999combinatorial} study `Poorman games', which are games involving bidding with artificial currencies. However, unlike our setting, players have full-information, making them fundamentally different, similar to \cite{babaioff2021fair,babaioff2022best}.

\section{Preliminaries}

\subsection{Model and Ideal Utility}

We consider repeated online allocation of a single, indivisible resource via repeated first-price auctions using artificial currencies as introduced in \cite{gorokh2019remarkable}. There are $n$ agents, $1, 2, \ldots, n$. At each time $t=1,2,\dots,T$, a principal decides which agent, if any, to receive the resource. Each agent $i$ has a nonnegative value $V_i[t]$ for the item at time $t$ and aims to maximize her total utility, which is the sum of the values of the items she got allocated. We assume the values $V_i[t]$ are drawn independently across both agents $i$ and times $t$. Specifically, each agent has a time-invariant value distribution $\mathcal F_i$, and the values $V_i[t]$ are drawn from $\mathcal F_i$ each round $t$. We make no assumptions about $\mathcal F_i$ beyond non-negativity.
The values $V_i[t]$ are private and are not known to the other agents or the principal.

Each agent $i$ has some exogenously defined fair share $\alpha_i$, where each $\alpha_i\geq 0$ and $\sum_{i=1}^n\alpha_i=1$. An agent's fair share measures the exogenously defined fraction of allocated items they should receive in a fair world. A fair principal should consider mechanisms that favor agents with higher fair shares in some way.

As in \cite{gorokh2019remarkable,banerjee2023robust,fikioris2023online}, to evaluate an agent's resulting utility, we use the benchmark of \textit{ideal utility}. Intuitively, the ideal utility $v_i^\star$ of agent $i$ is the maximum long-term time-average utility the agent can get if allocated an $\alpha_i$ fraction of the rounds. Formally, $v_i^\star$ is defined as the maximum expected utility they could achieve from a single round if they can only obtain the item with probability at most $\alpha_i$:
\begin{equation}
\label{eq:idealutilitydefinition}
\begin{split}
    v_i^\star = \max_{\rho:[0,\infty)\to[0,1]}&\E_{V_i\sim\mathcal F_i}[V_i\rho(V_i)]\\
    \text{s.t.}\quad&\E_{V_i\sim\mathcal F_i}[\rho(V_i)]\leq \alpha_i
\end{split}
\end{equation}
where $\rho(V_i)$ denotes the probability of obtaining the item conditioned on the value $V_i$.

We will be interested in \textit{robust} strategies used by the agents. These strategies approximate an agent's ideal utility regardless of the other agents' behavior, even if they behave adversarially. We give a formal definition below.

\begin{definition} \label{def:robust_policy}
A policy used by an agent $i$ is \textit{$\beta_i$-robust} if when using the policy, regardless of the behavior of the other agents $j\neq i$, the agent's per-round expected utility is at least $\beta_i$ fraction of her ideal utility, i.e.,
\begin{equation*}
    \frac{1}{T}\sum_{t=1}^T\mathbb E[U_i[t]] \geq \beta_i v_i^\star.
\end{equation*}
\end{definition}

\paragraph{Robust Strategies, Equilibria and Price of Anarchy}
As also mentioned in previous work, we point out an additional benefit of \cref{def:robust_policy}.
If every agent has a $\beta$-robust policy, then under any equilibrium, every agent achieves a $\beta$ fraction of her ideal utility.
In addition, this implies Price of Anarchy guarantees when the agents' fair shares are equal, $\alpha_i = 1/n$.
In this case, the social welfare is upper-bounded by the sum of the agents' ideal utilities, implying that, if every agent gets an $\beta$ fraction of her ideal utility, the resulting social welfare is a $\beta$ fraction of the optimal one.
This means that the Price of Anarchy is at most $1/\beta$.



\subsection{Pseudo-market mechanism}

In this section, we introduce the mechanism we use to allocate the resource, a repeated first-price auction with artificial currency.
We note again that this is the same mechanism as proposed in \cite{gorokh2019remarkable}.

At the beginning of time, each agent is endowed with a budget of $B_i[1] = \alpha_i T$ credits of artificial currency. At each time $t$, each agent submits a bid $b_i[t]$ no more than their current budget $B_i[t]$. The principal selects the agent $i^\star$ with the highest bid (ties broken arbitrarily) to allocate the item to. The winning agent $i^\star$ pays her bid in artificial currency, and no other agent pays.
We denote the payment by agent $i$ as $P_i[t] = b_i[t]\pmb1\{i = i^\star\}$. The budgets get updated as $B_{i}[t+1] = B_{i}[t] - P_i[t]$.
The mechanism is summarized in Mechanism~\ref{alg:mechanism}.

\floatname{algorithm}{Mechanism}
\begin{algorithm}[tb]
    \caption{Repeated first-price auction with artificial currency}
    \label{alg:mechanism}
    \textbf{Input}: Number of rounds $T$, number of agents $n$, and fair shares $\alpha_1, \ldots, \alpha_n$
    \begin{algorithmic}[1] 
        \STATE Endow each agent $i$ with $B_i[1] = \alpha_i T$ tokens of artificial currency.
        \FOR{$t=1,2,\dots,T$}
        \STATE Agents submit bids $b_i[t]$ where each $b_i[t] \leq B_i[t]$.
        \STATE Select the winner $i^\star = \arg\max_i b_i[t]$ (ties broken arbitrarily).
        \STATE Set the payments as $P_i[t] = b_i[t]\pmb1\{i=i^\star\}$.
        \STATE Update budgets $B_i[t+1] = B_i[t] - P_i[t]$.
        \STATE Agents get utility $U_i[t] = V_i[t]\pmb1\{i=i^\star\}$.
        \ENDFOR
    \end{algorithmic}
\end{algorithm}

Agents have no intrinsic value for the artificial currency and simply aim to maximize their total received value.
Letting $W_i[t]$ be the indicator for whether agent $i$ won the item at time $t$, we denote an agent's utility gained at time $t$ by $U_i[t]$, defined as $U_i[t] = V_i[t]W_i[t]$. Each agent seeks to maximize their total utility, $\sum_{t=1}^T U_i[t]$.

\subsection{Formulation as a two-player zero-sum game}
\label{sec:formulationasatwoplayerzerosumgame}

When analyzing the robustness of a strategy for a particular agent $i$, we can think of the other $n-1$ agents as one combined adversary.
Specifically, we can think of a single player with budget the sum of the other agents' budget and her bids are the maximum of their bids.
Formally, at each time $t$, we let $B'[t] = \sum_{j\neq i}B_j[t]$ and $b'[t] = \max_{j\neq i}b_j[t]$.
From the perspective of agent $i$, participating in the mechanism is the same as playing against a single adversary whose budget and bids at round $t$ are $B'[t]$ and $b'[t]$.
This gives a reduction from the problem of competing against multiple other players to the problem of only competing against a single adversary, i.e., the one adversary can always simulate the behavior of $n-1$ other players.
Since we are studying bounds on the achievable robustness of agent $i$'s strategies, we can think of this as a zero-sum game: agent $i$ is trying to maximize her total utility $\sum_{t=1}^T U_i[t]$ and the other agents, thought of as a single combined adversary, is trying to minimize this.

This two-player zero-sum game is the perspective we will take in what follows. We will fix an agent $i$ with fair share $\alpha$ and refer to her simply as ``the agent'' or ``the player,'' and drop the $i$ subscript from our notation.
We will refer to the other players, playing as a single adversary, as just the ``adversary'' with a fair share $1 - \alpha$. In notation, we will use $'$s to denote the adversary's quantities, e.g., $B'[t]$ is the adversary's budget, $P'[t]$ is the adversary's payment, etc.

We will note that while we have a two-person zero-sum game formulation, this game is extremely complicated. The strategy space for each player is extremely large, including all possible history-dependent bidding policies over time. Therefore, it is difficult to analyze the equilibrium behavior of this two-person zero-sum game.

\section{Deferred proofs of Section \ref{sec:lower}}
\label{sec:app:lower}

In this section, we present the full proof of \cref{thm:lowerBound}, the $(2-\sqrt 2)$-robust strategy lower bound.
We first restate the theorem for completeness.

\lowerBound*

\begin{proof}[Proof of \cref{thm:lowerBound}]
By \cref{lem:bernoullivaluedistributionisworstcase}, we can assume without loss of generality that $V[t]\sim\Bern(\alpha)$. Using our reduction in the proof of \cref{lem:bernoullivaluedistributionisworstcase}, the \strategy strategy reduces to just bidding $b[t] = r[t]V[t]\pmb1\{t\leq\tau\}$, where we recall that $\tau = \max\{t\geq 1: B[t] \geq \bar b\}$.

We assume without loss of generality that each adversary bid $b'[t]\leq \bar b$. Let $\mathcal H_t$ denote the history up to and including time $t$. Let $\mathcal G_t$ be the $\sigma$-algebra generated by $\mathcal H_t$ and $b'[t+1]$. Define 
\begin{align*}
    M_1[t] & = \sum_{s=1}^{\min\{t, \tau\}} P[s] - \sum_{s=1}^{\min\{t, \tau\}} \mathbb E[P[s]\mid\mathcal G_{s-1}],\\
    M_2[t] & = \sum_{s=1}^{\min\{t, \tau\}} U[s] - \sum_{s=1}^{\min\{t, \tau\}} \mathbb E[U[s]\mid\mathcal G_{s-1}],\\
    M_3[t] & = \sum_{s=1}^t b'[s]V[s] - \sum_{s=1}^t \mathbb E[b'[s]V[s]\mid\mathcal G_{s-1}].
\end{align*}
Observing that $\tau$ is a stopping time with respect to the filtration $\mathcal G_t$, by definition, $M_1[t]$, $M_2[t]$, and $M_3[t]$ are martingales with respect to $\mathcal G_t$. Let $\epsilon>0$ and define the event
\begin{equation*}
     E = \{M_1[T] < \epsilon\bar b, M_2[T] > -\epsilon, M_3[T] < \epsilon\bar b\}.
\end{equation*}
Observe that the increments of $M_1[t]$ and $M_3[t]$ are almost surely bounded by $\bar b$ and the increments of $M_2[t]$ are almost surely bounded by $1$, so by the Azuma-Hoeffding inequality,
\begin{equation}
\label{eq:PrE}
    \Pr(E) \geq 1 - 3\exp\left(-\frac{\epsilon^2}{2\bar bT}\right).
\end{equation}
In what follows, consider what happens on the high probability event $E$.

Observe that $b'[t]$ is $\mathcal G_{t-1}$-measurable and $V[t]$ is independent of $\mathcal G_{t-1}$, so
\begin{equation*}
    \mathbb E[b'[t]V[t]\mid\mathcal G_{t-1}] = \mathbb E[V[t]] = \alpha b'[t].
\end{equation*}
Then, by the definition of $E$,
\begin{equation*}
\begin{split}
    \sum_{t=1}^T b'[t]V[t] & = \sum_{t=1}^T \mathbb E[b'[t]V[t]\mid\mathcal G_{t-1}] + M_3[T]\\
    &\leq \alpha \sum_{t=1}^T b'[t] + \epsilon\bar b. 
\end{split}
\end{equation*}
This implies
\begin{equation*}
\begin{split}
    \sum_{t=1}^T b'[t] & = \sum_{t=1}^T b'[t](1-V[t]) + \sum_{t=1}^T b'[t]V[t]\\
    & \leq \sum_{t=1}^T b'[t](1-V[t]) + \alpha \sum_{t=1}^T b'[t] + \epsilon\bar b\\
    & \leq (1-\alpha)T + \alpha \sum_{t=1}^T b'[t] + \epsilon\bar b
\end{split}
\end{equation*}
where the last inequality comes from the adversary's budget constraint of $(1-\alpha)T$, observing that if $V[t]=0$, then adversary always wins and pays their bid. Rearranging,
\begin{equation}
\label{eq:adversarybiddingonE}
    \sum_{t=1}^T b'[t] \leq T + \frac{\epsilon\bar b}{1-\alpha}.
\end{equation}

Now we compute the expected payments $P[t]$ and utilities $U[t]$ conditioned on $\mathcal G_{t-1}$ on the event $\{t\leq \tau\}$, using the fact that $b'[t]$ is $\mathcal G_{t-1}$-measurable and $b[t]$ is independent of $\mathcal G_{t-1}$:
\begin{equation*}
\begin{split}
    \mathbb E[P[t]\mid\mathcal G_{t-1}] & =\mathbb E[b[t]\pmb1\{b[t]>b'[t]\}\mid\mathcal G_{t-1}]\\
    & =\mathbb E[r[t]V[t]\pmb1\{r[t]>b'[t]\}\mid\mathcal G_{t-1}]\\
    & = \alpha\int_{b'[t]}^{\bar b}\frac{x}{\bar b}\,dx = \alpha\left(\frac{\bar b^2 - b'[t]^2}{2\bar b}\right)
\end{split}
\end{equation*}
and
\begin{equation*}
\begin{split}
   \mathbb E[U[t]\mid\mathcal G_{t-1}] & = \Pr(b[t]>b'[t]\mid\mathcal G_{t-1})\\
   & = \Pr(r[t]V[t]>b'[t]\mid\mathcal G_{t-1})\\
   & = \alpha\int_{b'[t]}^{\bar b}\frac{dx}{\bar b} = \alpha\left(1 - \frac{b'[t]}{\bar b}\right).
\end{split}
\end{equation*}

The agent's utility can then be bounded as
\begin{equation}
\begin{split}
\label{eq:utilityonE}
    \sum_{t=1}^T U[t] & = \sum_{t=1}^{\min\{T,\tau\}} U[t]\\
    & = \sum_{t=1}^{\min\{T,\tau\}}\mathbb E[U[t]\mid\mathcal G_{t-1}] + M_2[T]\\
     &\geq \sum_{t=1}^{\min\{T,\tau\}}\alpha\left(1-\frac{b'[t]}{\bar b}\right) - \epsilon\\
     &= \alpha\left(\min\{T,\tau\}-\frac{1}{\bar b}\sum_{t=1}^{\min\{T,\tau\}} b'[t]\right)-\epsilon,
\end{split}
\end{equation}
which we shall lower bound in cases. First, consider what happens on $E\cap\{\tau = T\}$. We use \eqref{eq:adversarybiddingonE} to obtain
\begin{equation*}
\begin{split}
    \alpha\left(T-\frac{1}{\bar b}\sum_{t=1}^T b'[t]\right) & \geq \alpha\left(T - \frac{1}{\bar b}\left(T + \frac{\epsilon\bar b}{1-\alpha}\right)\right)\\
    & = \alpha T\left(1-\frac{1}{\bar b}\right) - \frac{\alpha}{1-\alpha}\epsilon.
\end{split}
\end{equation*}
Using \eqref{eq:utilityonE}, we obtain
\begin{equation}
\label{eq:utilityonEwhennotrunningoutofmoney}
    \frac{1}{\alpha T}\sum_{t=1}^T U[t] \geq \left(1-\frac{1}{\bar b}\right) - \frac{\epsilon}{(1-\alpha)T} - \frac{\epsilon}{T}.
\end{equation}

Otherwise, consider what happens on $E\cap\{\tau < T\}$. If $\tau < T$, by definition of $\tau$,
\begin{equation}
\label{eq:paymentishighwhenrunningoutofmoney}
    \sum_{t=1}^\tau P[t] \geq \alpha T - \bar b.
\end{equation}
By the definition of $E$,
\begin{equation}
\label{eq:paymentonE}
    \alpha\left(\frac{\bar b\tau}{2} - \frac{1}{2\bar b}\sum_{t=1}^\tau b'[t]^2\right) = \sum_{t=1}^\tau \mathbb E[P[t]\mid\mathcal G_{t-1} = \sum_{t=1}^\tau P[t] - M_1[T] > \sum_{t=1}^\tau P[t] - \bar b\epsilon.
\end{equation}
Combining \eqref{eq:paymentishighwhenrunningoutofmoney} and \eqref{eq:paymentonE} yields
\begin{equation*}
    \alpha\left(\frac{\bar b\tau}{2} - \frac{1}{2\bar b}\sum_{t=1}^\tau b'[t]^2\right)\geq  \alpha T - \bar b - \bar b\epsilon,
\end{equation*}
which implies
\begin{equation}
\label{eq:boundsumsquaresadversarybids}
    \sum_{t=1}^\tau b'[t]^2 \leq 2\bar b\left(\frac{\bar b\tau }{2} - T\right) + \frac{2\bar b + 2\bar b\epsilon}{\alpha}.
\end{equation}
We can use the obvious bound $\sum_{t=1}^\tau b'[t]^2\geq 0$ with \eqref{eq:boundsumsquaresadversarybids} to obtain
\begin{equation}
\label{eq:lowerboundstoppingtime}
    \tau \geq \frac{2T}{\bar b} - \frac{2\bar b + 2\bar b\epsilon}{\alpha \bar b}.
\end{equation}
By the Cauchy-Schwartz inequality, we have
\begin{equation}
\label{eq:sumbidsbidssquaredinequality}
    \sum_{t=1}^\tau b'[t] \leq \sqrt{\tau\sum_{t=1}^\tau b'[t]^2}.
\end{equation}
Using \eqref{eq:utilityonE} followed by \eqref{eq:boundsumsquaresadversarybids} combined with \eqref{eq:sumbidsbidssquaredinequality},
\ifarxiv 
\begin{equation}
\label{eq:utilityboundonEwhenrunoutofmoney}
\begin{split}
    \frac{1}{\alpha T}\sum_{t=1}^T U[t] & \geq \frac{\tau}{T}-\frac{1}{\bar b T}\sum_{t=1}^\tau b'[t]-\frac{\epsilon}{\alpha T}\\
    &=\frac{\tau}{T} - \frac{1}{\bar bT}\sqrt{\tau\left(2\bar b\left(\frac{\bar b\tau}{2}-T\right)\right) + \frac{2\bar b + 2\bar b\epsilon}{\alpha}} - \frac{\epsilon}{\alpha T}.
\end{split}
\end{equation}
\else
\begin{equation}
\label{eq:utilityboundonEwhenrunoutofmoney}
\begin{split}
    &\frac{1}{\alpha T}\sum_{t=1}^T U[t] \geq \frac{\tau}{T}-\frac{1}{\bar b T}\sum_{t=1}^\tau b'[t]-\frac{\epsilon}{\alpha T}\\
    & =\frac{\tau}{T} - \frac{1}{\bar bT}\sqrt{\tau\left(2\bar b\left(\frac{\bar b\tau}{2}-T\right)\right) + \frac{2\bar b + 2\bar b\epsilon}{\alpha}}\\
    &\quad - \frac{\epsilon}{\alpha T}.
\end{split}
\end{equation}
\fi 
We bound the above in cases. If $\tau < 2T/\bar b$, then using \eqref{eq:lowerboundstoppingtime}, \eqref{eq:utilityboundonEwhenrunoutofmoney} is at least 
\begin{equation}
\label{eq:utilityboundonEwhenrunoutofmoneykindafast}
\begin{split}
    \frac{2}{\bar b} - \frac{2\bar b + 2\bar b\epsilon}{\alpha \bar b T} - \frac{1}{\bar b T}\sqrt{\frac{2\bar b + 2\bar b\epsilon}{\alpha}} - \frac{\epsilon}{\alpha T}.
\end{split}
\end{equation}
Otherwise, if $\tau \geq 2T/\bar b$, then \eqref{eq:utilityboundonEwhenrunoutofmoney} is at least
\ifarxiv 
\begin{equation}
\label{eq:utilityboundonEwhenrunoutofmoneynotsuperfast}
\begin{split}
    & \frac{\tau}{T} - \frac{1}{\bar b}\sqrt{\frac{\tau}{T}\left(2\bar b\left(\frac{\bar b\tau}{2T} - 1\right)\right)} - \frac{1}{\bar b T}\sqrt{\frac{2\bar b + 2\bar b\epsilon}{\alpha}} - \frac{\epsilon}{\alpha T}\\
    & \quad \geq \inf_{x\in [2/\bar b, 1)}\left[x - \frac{1}{\bar b}\sqrt{x\left(2\bar b\left(\frac{\bar bx}{2}-1\right)\right)}\right] -\frac{1}{\bar b T}\sqrt{\frac{2\bar b + 2\bar b\epsilon}{\alpha}} - \frac{\epsilon}{\alpha T}\\
    & \quad = 1 - \frac{1}{\bar b}\sqrt{\bar b^2-2\bar b}- \frac{1}{\bar b T}\sqrt{\frac{2\bar b + 2\bar b\epsilon}{\alpha}} - \frac{\epsilon}{\alpha T}.
\end{split}
\end{equation}
\else
\begin{equation}
\label{eq:utilityboundonEwhenrunoutofmoneynotsuperfast}
\begin{split}
    \frac{\tau}{T}& - \frac{1}{\bar b}\sqrt{\frac{\tau}{T}\left(2\bar b\left(\frac{\bar b\tau}{2T} - 1\right)\right)}\\
    & - \frac{1}{\bar b T}\sqrt{\frac{2\bar b + 2\bar b\epsilon}{\alpha}} - \frac{\epsilon}{\alpha T}\\
    \geq &\inf_{x\in [2/\bar b, 1)}\left[x - \frac{1}{\bar b}\sqrt{x\left(2\bar b\left(\frac{\bar bx}{2}-1\right)\right)}\right]\\
    &-\frac{1}{\bar b T}\sqrt{\frac{2\bar b + 2\bar b\epsilon}{\alpha}} - \frac{\epsilon}{\alpha T}\\
    =& 1 - \frac{1}{\bar b}\sqrt{\bar b^2-2\bar b}- \frac{1}{\bar b T}\sqrt{\frac{2\bar b + 2\bar b\epsilon}{\alpha}} - \frac{\epsilon}{\alpha T}.
\end{split}
\end{equation}
\fi
By combining \eqref{eq:PrE}, \eqref{eq:utilityonEwhennotrunningoutofmoney}, \eqref{eq:utilityboundonEwhenrunoutofmoneykindafast}, \eqref{eq:utilityboundonEwhenrunoutofmoneynotsuperfast}, and Markov's inequality, we obtain
\ifarxiv 
\begin{equation*}
\begin{split}
    \frac{1}{\alpha T}\sum_{t=1}^T \mathbb E[U[t]] \geq & \left(1 - 3\exp\left(-\frac{\epsilon^2}{2\bar b T}\right)\right)\min\Bigg\{\left.\left(1-\frac{1}{\bar b}\right) - \frac{\epsilon}{(1-\alpha)T} - \frac{\epsilon}{T},\right.\\
    & \left.\min\left\{\frac{2}{\bar b} - \frac{2\bar b + 2\bar b\epsilon}{\alpha \bar b T},\right.\right. \left. 1 - \frac{1}{\bar b}\sqrt{\bar b^2-2\bar b} \right\}- \frac{1}{\bar b T}\sqrt{\frac{2\bar b + 2\bar b\epsilon}{\alpha}} - \frac{\epsilon}{\alpha T}\Bigg\}.
\end{split}
\end{equation*}
\else
\begin{equation*}
\begin{split}
    & \frac{1}{\alpha T}\sum_{t=1}^T \mathbb E[U[t]] \geq \left(1 - 3\exp\left(-\frac{\epsilon^2}{2\bar b T}\right)\right)\min\Bigg\{\\
    & \left.\left(1-\frac{1}{\bar b}\right) - \frac{\epsilon}{(1-\alpha)T} - \frac{\epsilon}{T},\min\left\{\frac{2}{\bar b} - \frac{2\bar b + 2\bar b\epsilon}{\alpha \bar b T},\right.\right.\\
    & \left. 1 - \frac{1}{\bar b}\sqrt{\bar b^2-2\bar b} \right\}- \frac{1}{\bar b T}\sqrt{\frac{2\bar b + 2\bar b\epsilon}{\alpha}} - \frac{\epsilon}{\alpha T}\Bigg\}.
\end{split}
\end{equation*}
\fi
By setting $\epsilon = \sqrt{T\ln T}$ and substituting $\bar b = 1+\sqrt{2}$, we obtain the theorem statement.
\end{proof}
\section{Deferred proofs of Section \ref{sec:upper}}
\label{sec:app_upper}

\subsection{Proof of Theorem \ref{thm:fixeddistributionupperbound}}

In this section, we provide the full proof of \cref{thm:fixeddistributionupperbound}, which we first restate for completeness.

\fixeddistributionupperbound*

We assume without loss of generality that the adversary's bids are also uniformly bounded by $\bar b$, the upper bound of the support of the agent's bid distribution. For expository convenience, we assume the agent will win in case of ties when $b'[t] = b[t]$. Because we are trying to give an upper bound on the agent's utility, we make the following assumptions without loss of generality. The agent can bid anything until she has negative tokens instead of more strictly enforcing the budget constraint. Contrarily, the adversary will not be allowed to bid unless their budget is at least $\bar b$. 

With these assumptions, let us write the agent's utility more explicitly. Let $\tau = \max\{t\leq T:B[t] \geq 0\}$ be the last time that the agent has tokens, and let $\tau' = \max\{t\leq T: B'[t] \geq \bar b\}$ be the last time the adversary is allowed to bid. The agent's utility at time $t$ can be written as
\begin{equation*}
    U[t] = \begin{cases}V[t] & \text{if $t\leq \tau$, and either $t > \tau'$ or $r[t]\geq b'[t]$}\\0 & \text{otherwise}\end{cases}.
\end{equation*}
We can then write the agent's objective as
\begin{equation}
\label{eq:agentobjective}
    \sum_{t=1}^T U[t] = \sum_{t=1}^{\tau\wedge\tau'}V[t]\pmb1\{r[t]\geq b'[t]\} + \sum_{t=\tau'+1}^\tau V[t],
\end{equation}
where $a\wedge b$ denotes the minimum of $a$ and $b$.

Suppose the adversary has the strategy of bidding every round from a fixed distribution $\mathcal D'$ until their budget goes below $\bar b$. In this case, the agent's expected utility can be more easily analyzed. Let $r\sim \mathcal D$ and $b'\sim\mathcal D'$. Define
\ifarxiv 
\begin{equation}
\begin{split}
\label{eq:agentmodifiedobjective}
    U & (\alpha,\mathcal D, \mathcal D') = \alpha T\left(\rule{0cm}{1.0cm}\right.\Pr(r \geq b')\min\left\{1, \frac{1}{\mathbb E[r\pmb1\{r\geq b'\}]},\right. \left.\left.\frac{1-\alpha}{(1-\alpha)\mathbb E[b'] + \alpha\mathbb E[b'\pmb1\{b' > r\}]}\right\}\right.\\
    & + \left(\rule{0cm}{0.8cm}\right.\min\Bigg\{1 \left.\left.- \frac{1-\alpha}{(1-\alpha)\mathbb E[b'] + \alpha\mathbb E[b'\pmb1\{b' > r\}]},\right.\right.\left.\left.\frac{1 - \frac{1-\alpha}{(1-\alpha)\mathbb E[b'] + \alpha\mathbb E[b'\pmb1\{b' > r\}]}\mathbb E[r\pmb1\{r\geq b'\}]}{\mathbb E[r]}\right.\right.\Bigg\}\left.\rule{0cm}{0.8cm}\right)^+\left.\rule{0cm}{1.0cm}\right).
\end{split}
\end{equation}
\else
\begin{equation}
\begin{split}
\label{eq:agentmodifiedobjective}
    U & (\alpha,\mathcal D, \mathcal D') = \alpha T\left(\rule{0cm}{1.0cm}\right.\Pr(r \geq b')\\
    & \quad\min\left\{1, \frac{1}{\mathbb E[r\pmb1\{r\geq b'\}]},\right.\\
    & \left.\left.\quad \frac{1-\alpha}{(1-\alpha)\mathbb E[b'] + \alpha\mathbb E[b'\pmb1\{b' > r\}]}\right\}\right.\\
    & \quad + \left(\rule{0cm}{0.8cm}\right.\min\Bigg\{1\\
    & \quad \left.\left.- \frac{1-\alpha}{(1-\alpha)\mathbb E[b'] + \alpha\mathbb E[b'\pmb1\{b' > r\}]},\right.\right.\\
    & \left.\left.\quad\frac{1 - \frac{1-\alpha}{(1-\alpha)\mathbb E[b'] + \alpha\mathbb E[b'\pmb1\{b' > r\}]}\mathbb E[r\pmb1\{r\geq b'\}]}{\mathbb E[r]}\right.\right.\\
    & \quad\Bigg\}\left.\rule{0cm}{0.8cm}\right)^+\left.\rule{0cm}{1.0cm}\right).
\end{split}
\end{equation}
\fi
We claim that $U(\alpha,\mathcal D,\mathcal D')$ is roughly the agent's expected utility. The formula can be derived from the following intuition. When both the agent and adversary have tokens left, the agent will win with probability $\alpha\Pr(r\geq b')$, the agent's expected payment is $\alpha\mathbb E[r\pmb1\{r\geq b'\}]$, and the adversary's expected payment is $\frac{1-\alpha}{1-\alpha\mathbb E[b'] + \alpha\mathbb E[b'\pmb1\{b'>r\}]}$. If the agent's expected payment is greater than the adversary's expected payment, then the agent will run out of tokens at time about $\min\left\{T, \frac{T}{\mathbb E[r\pmb1\{r\geq b'\}]}\right\}$, which is before the adversary runs out of tokens, giving her a total expected utility of about $\alpha T\min\left\{1, \frac{1}{\mathbb E[r\pmb1\{r\geq b'\}]}\right\}$. Otherwise, if the agent's expected payment is less than the adversary's expected payment, then the adversary runs out of tokens first at time about $\frac{(1-\alpha)T}{(1-\alpha)\mathbb E[b'] + \alpha\mathbb E[b'\pmb1\{b'>r\}]}$. Up to this time, the agent has won $\alpha T\Pr(r\geq b')$ fraction of the items. She spent about $\alpha \mathbb E[r\pmb1\{r\geq b'\}]$ tokens per round. There are only so many rounds left, of which only $\alpha$ fraction of those times $t$ the agent has $V[t]=1$, so she can only get at most $\alpha T\left(1-\frac{(1-\alpha)}{(1-\alpha)\mathbb E[b'] + \alpha\mathbb E[b'\pmb1\{b'>r\}]}\right)$ ideal utility more. The agent also has a budget constraint. After the adversary runs out of tokens, she has about $\alpha(\mathbb E[r\pmb1\{r\geq b'\}])$ tokens left and spends around $\mathbb E[r]$ tokens each time she requests, so she also has the additional bound of $\alpha T\left(\frac{1 - \frac{1-\alpha}{(1-\alpha)\mathbb E[b'] + \alpha\mathbb E[b'\pmb1\{b' > r\}]}\mathbb E[r\pmb1\{r\geq b'\}]}{\mathbb E[r]}\right)$ on her ideal utility gained after the adversary runs out of tokens. This intuition gives us the formula in \eqref{eq:agentmodifiedobjective}, which is formalized in the below lemma.
\begin{lemma}
\label{lem:agentmodifiedobjective}
The expectation of \eqref{eq:agentobjective} is at most $O(\sqrt{T\log T})$ more than \eqref{eq:agentmodifiedobjective}.
\end{lemma}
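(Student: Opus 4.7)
The plan is to exploit concentration of three key stochastic processes under the adversary's fixed bidding distribution $\mathcal D'$: the agent's cumulative spending, the adversary's cumulative spending, and the agent's cumulative utility. I would define Doob-type martingales tracking each quantity minus its conditional expectation, with increments almost surely bounded by $\bar b$ (for spending) or $1$ (for utility). Azuma--Hoeffding then guarantees that each process stays within $O(\sqrt{T \log T})$ of its conditional mean with probability $1 - T^{-\Omega(1)}$, which effectively turns the delicate random stopping times $\tau, \tau'$ into essentially deterministic quantities. On this good event (which contributes only $O(\sqrt{T\log T})$ to the expectation when the complementary event is weighted by the trivial $O(T)$ bound on utility), the remaining analysis becomes a bookkeeping exercise.

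Let $\gamma = \alpha \mathbb E[r\pmb1\{r \ge b'\}]$ be the expected per-round agent payment while both players still have budget, $\gamma' = (1-\alpha)\mathbb E[b'] + \alpha \mathbb E[b'\pmb1\{b' > r\}]$ the corresponding adversary quantity, and $\mu = \alpha \Pr(r \ge b')$ the agent's expected per-round utility during this phase. On the good event, $\tau$ sits within $O(\sqrt{T\log T})$ of $T \wedge (\alpha T/\gamma)$ and $\tau'$ within $O(\sqrt{T\log T})$ of $T \wedge ((1-\alpha)T/\gamma')$, since the linear drift of cumulative spending dominates the martingale fluctuation.

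I would then split into two cases based on the order of $\tau$ and $\tau'$. In the case $\tau' \le \tau$, the first sum of \eqref{eq:agentobjective} concentrates around $\mu (\tau \wedge \tau') = \mu \tau'$, and plugging in the three candidate values for $\tau'$ (namely $T$, $\alpha T/\gamma$, and $(1-\alpha)T/\gamma'$, coming from the trivial horizon cap, the agent budget constraint, and the adversary budget constraint) produces exactly the three expressions inside the outer $\min$ of the first summand of \eqref{eq:agentmodifiedobjective}. For the second sum, after time $\tau'$ the agent faces no competition and wins every round she bids, so she wins approximately $\alpha \min\{T - \tau',\, (\alpha T - \gamma \tau')/(\alpha \mathbb E[r])\}$ rounds with $V[t] = 1$, where the second term reflects the remaining budget divided by the conditional expected payment per bid. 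After algebraic simplification this matches the second summand of \eqref{eq:agentmodifiedobjective}, with the outer positive part $(\cdot)^+$ automatically covering the subcase $\tau = \tau'$. In the opposite case $\tau \le \tau'$, the second sum of \eqref{eq:agentobjective} vanishes and the first sum is bounded symmetrically by the first summand of \eqref{eq:agentmodifiedobjective}.

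The main obstacle is careful bookkeeping in the boundary regimes --- when $\gamma$ or $\gamma'$ is close to zero, or when $\tau$ and $\tau'$ are within $O(\sqrt{T\log T})$ of each other --- so that the additive $O(\sqrt{T\log T})$ slack absorbs all lower-order error terms uniformly. In particular, one wants to avoid dividing by quantities known only up to $O(\sqrt{T\log T})$ error, which would amplify the error beyond the advertised rate. I expect this to be handled by using $\bar b$-boundedness of bids to directly bound cumulative spending additively when $\gamma$ or $\gamma'$ is small, rather than passing through a ratio bound.
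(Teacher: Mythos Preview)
Your proposal is correct and follows essentially the same approach as the paper: both use Hoeffding-type concentration to pin the random stopping times $\tau,\tau'$ to deterministic approximations (the paper makes this explicit via four times $t_1,t_2,t_3,t_4$ that sandwich $\tau$ and $\tau'$), then bound each summand of \eqref{eq:agentobjective} by the corresponding piece of \eqref{eq:agentmodifiedobjective} on the high-probability event. The only cosmetic difference is that since both players bid i.i.d.\ here, the paper applies Hoeffding directly to i.i.d.\ sums rather than Azuma to martingales, and it handles your ``boundary regimes'' concern simply by capping each $t_j$ at $1$ or $T$ in its definition.
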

\begin{proof}
Let $X[1], \dots, X[t]$ be i.i.d. drawn from $\mathcal D$ with probability $\alpha$ and $0$ with probability $1-\alpha$ and $X'[1], \dots, X'[t]$ be i.i.d. drawn from $\mathcal D'$ such that $b[t] = X[t]\pmb1\{t\leq \tau\}$ and $b'[t] = X'[t]\pmb1\{t\leq\tau'\}$. Define
\begin{align*}
    & t_1 = \min\left\{T, \left\lfloor\frac{\alpha T + \bar b\sqrt{T\ln T}}{\alpha\mathbb E[r\pmb1\{r\geq b'\}]}\right\rfloor\right\},\\
    & t_2 = \max\left\{1, \left\lceil\frac{(1-\alpha)T - \bar b - \bar b\sqrt{T\ln T}}{(1-\alpha)\mathbb E[b'] + \alpha\mathbb E[b'\pmb1\{b' > r\}]}\right\rceil\right\},\\
    & t_3 = \min\left\{T, \left\lfloor\frac{(1-\alpha)T - \bar b + \bar b\sqrt{T\ln T}}{(1-\alpha)\mathbb E[b'] + \alpha\mathbb E[b'\pmb1\{b' > r\}]}\right\rfloor\right\},\\
    & t_4 = \min\left\{T, \left\lfloor\frac{T - t_2\mathbb E[r\pmb1\{r\geq b'\}] + t_3\mathbb E[r]}{\mathbb E[r]}\right\rfloor\right\}.
\end{align*}
Let $E$ be the event where the following hold.
\ifarxiv 
\begin{equation}
\label{eq:spendingbeforet1}
\begin{split}
    \sum_{s=1}^{t_1} X[s]\pmb1\{X[s] \geq X'[s]\} > \alpha t_1\mathbb E[r\pmb1\{r\geq b'\}] - \bar b\sqrt{T\ln T}
\end{split}
\end{equation}
\begin{equation}
\label{eq:adversaryspendingbeforet2}
\begin{split}
    \sum_{s=1}^{t_2} X'[s]\pmb1\{X[s] < X'[s]\} < t_2\Big((1-\alpha)\mathbb E[b']
     + \alpha\mathbb E[b'\pmb1\{b' > r\}]\Big) +  \bar b\sqrt{T\ln T}
\end{split}
\end{equation}
\begin{equation}
\label{eq:utilitybeforet1wedget2}
\begin{split}
    \sum_{s=1}^{t_1\wedge t_2}V[s]\pmb1\{X[s] \geq X'[s]\} < \alpha\Pr(r \geq b')(t_1\wedge t_2)+ \sqrt{T\ln T}
\end{split}
\end{equation}
\begin{equation}
\label{eq:spendingbeforet2}
\begin{split}
    \sum_{s=1}^{t_2}X[s]\pmb1\{X[s]\geq X'[s]\} > \alpha t_2\mathbb E[r\pmb1\{r\geq b'\}]- \bar b\sqrt{T\ln T}
\end{split}
\end{equation}
\begin{equation}
\label{eq:adversaryspendingbeforet3}
\begin{split}
    \sum_{s=1}^{t_3}X'[s]\pmb1\{X[s]<X'[s]\} > t_3\Big((1-\alpha)\mathbb E[b'] + \alpha\mathbb E[b'\pmb1\{b'<r\}]\Big) -  \bar b\sqrt{T\ln T}
\end{split}
\end{equation}
\else
\begin{equation}
\label{eq:spendingbeforet1}
\begin{split}
    \sum_{s=1}^{t_1} X[s]\pmb1\{X[s] \geq X'[s]\} > \alpha t_1\mathbb E[r\pmb1\{r\geq b'\}]\\
     - \bar b\sqrt{T\ln T}
\end{split}
\end{equation}
\begin{equation}
\label{eq:adversaryspendingbeforet2}
\begin{split}
    \sum_{s=1}^{t_2} X'[s]\pmb1\{X[s] < X'[s]\} < t_2\Big((1-\alpha)\mathbb E[b']\\
     + \alpha\mathbb E[b'\pmb1\{b' > r\}]\Big) +  \bar b\sqrt{T\ln T}
\end{split}
\end{equation}
\begin{equation}
\label{eq:utilitybeforet1wedget2}
\begin{split}
    \sum_{s=1}^{t_1\wedge t_2}V[s]\pmb1\{X[s] \geq X'[s]\} < \alpha\Pr(r \geq b')(t_1\wedge t_2)\\
     + \sqrt{T\ln T}
\end{split}
\end{equation}
\begin{equation}
\label{eq:spendingbeforet2}
\begin{split}
    \sum_{s=1}^{t_2}X[s]\pmb1\{X[s]\geq X'[s]\} > \alpha t_2\mathbb E[r\pmb1\{r\geq b'\}]\\
    - \bar b\sqrt{T\ln T}
\end{split}
\end{equation}
\begin{equation}
\label{eq:adversaryspendingbeforet3}
\begin{split}
    \sum_{s=1}^{t_3}X'[s]\pmb1\{X[s]<X'[s]\} > t_3\Big((1-\alpha)\mathbb E[b']\\
     + \alpha\mathbb E[b'\pmb1\{b'<r\}]\Big) -  \bar b\sqrt{T\ln T}
\end{split}
\end{equation}
\fi

\begin{equation}
\label{eq:spendingbetweent3andt4}
    \sum_{s=t_3+1}^{t_4}X[s] < \alpha(t_4 - t_3)\mathbb E[r] +  \bar b\sqrt{T\ln T}
\end{equation}
\begin{equation}
\label{eq:utilitybetweent2andt4}
    \sum_{s=t_2+1}^{t_4}V[s] > \alpha(t_4 - t_2) -  \sqrt{T\ln T}
\end{equation}
By Hoeffding's inequality, $\Pr(E) \geq 1 - O(1/T)$.  (Observe that bids are uniformly bounded by $\bar b$ and the Bernoulli values are uniformly bounded by $1$ so we can indeed apply Hoeffding's inequality like this.)

In what follows, consider what happens on the high probability event $E$. The following hold. By \eqref{eq:spendingbeforet1},

\ifarxiv 

\begin{equation*}
\begin{split}
    t_1 < T \implies\sum_{t=1}^{t_1\wedge \tau}P[t] & \geq \sum_{t=1}^{t_1\wedge\tau}X[t]\pmb1\{X[t]\geq X'[t]\}\\
    & \geq \min\{\alpha T,\alpha t_1\mathbb E[r\pmb1\{r\geq b'\}] - \bar b\sqrt{T\ln T}\}\\
    & \geq \alpha T.
\end{split}
\end{equation*}
By \eqref{eq:adversaryspendingbeforet2},
\begin{equation*}
\begin{split}
    \sum_{t=1}^{t_2\wedge\tau}P'[t] & \leq \sum_{t=1}^{t_2\wedge\tau}X'[t]\pmb1\{X[t] < X'[t]\}\\
    & < t_2((1-\alpha)\mathbb E[b'] + \alpha\mathbb E[b'\pmb1\{b' > r\}]) +  \bar b\sqrt{T\ln T}\\
    & \leq (1-\alpha)T - \bar b.\\
\end{split}
\end{equation*}
By \eqref{eq:adversaryspendingbeforet3},
\begin{equation*}
\begin{split}
    t_3 < T \implies \sum_{t=1}^{t_3\wedge\tau'}P'[t] & \geq \sum_{t=1}^{t_3\wedge\tau'}X'[t]\pmb1\{X[t] < X'[t]\}\\
    & \geq \min\Big\{(1-\alpha)T - \bar b, t_3((1-\alpha)\mathbb E[b'] + \alpha\mathbb E[b'\pmb1\{b'<r\}]) -  \bar b\sqrt{T\ln T}\Big\}\\
    & \geq (1-\alpha)T - \bar b.
\end{split}
\end{equation*}
By \eqref{eq:spendingbeforet2} and \eqref{eq:spendingbetweent3andt4},
\begin{equation*}
\begin{split}
    t_4 < T \implies  \sum_{t=1}^{t_4}P[t] & \geq \sum_{t=1}^{t_2\wedge\tau} P[t] + \sum_{t=t_3+1}^{t_4}P[t]\\
    & \geq \sum_{t=1}^{t_2\wedge\tau}X[s]\pmb1\{X[s]\geq X'[s]\} + \sum_{t=t_3+1}^{t_4\wedge\tau}X[s]\\
    & \geq\min\{\alpha T, \alpha t_2\mathbb E[r\pmb1\{r\geq b'\}] + \alpha(t_4 - t_3)\mathbb E[r]\}\\
    &  \geq \alpha T.
\end{split}
\end{equation*}
The above imply
\begin{equation*}
    \tau \leq t_1\wedge t_4 \quad \text{and} \quad t_2\wedge\tau\leq \tau'\leq t_3.
\end{equation*}
Then, using \eqref{eq:utilitybeforet1wedget2},
\begin{equation*}
\begin{split}
     \sum_{t=1}^{\tau\wedge\tau'} U[t] & \leq \sum_{t=1}^{t_1\wedge t_3}U[t]\\
     & = \sum_{t=1}^{t_1\wedge t_2}U[t] + \sum_{t=t_2+1}^{t_3}U[t]\\
     & \leq \sum_{t=1}^{t_1\wedge t_2}U[t] + O(\sqrt{T\log T})\\
     & = \sum_{t=1}^{t_1\wedge t_2} V[s]\pmb1\{X[s] \geq X'[s]\} + O(\sqrt{T\log T})\\
     & < \alpha\Pr(r \geq b')(t_1\wedge t_2) + O(\sqrt{T\log T})\\
     & \leq \alpha T\Pr(r \geq b')\min\left\{1, \frac{1}{\mathbb E[r\pmb1\{r\geq b'\}]},\frac{1-\alpha}{(1-\alpha)\mathbb E[b'] + \alpha\mathbb E[b'\pmb1\{b' > r\}}\right\} + O(\sqrt{T\log T})
\end{split}
\end{equation*}
and using \eqref{eq:utilitybetweent2andt4},
\begin{equation*}
\begin{split}
    \sum_{t=\tau'+1}^\tau U[t] & = \sum_{t=\tau'+1}^\tau V[t]\\
    & \leq \sum_{t=t_2+1}^{t_4}V[t]\\
    & < \alpha(t_4-t_2)^+ + O(\sqrt{T\log T})\\
    & \leq \alpha T\left(\min\left\{1- \frac{1-\alpha}{(1-\alpha)\mathbb E[b'] + \alpha\mathbb E[b'\pmb1\{b' > r\}]},\frac{1 - \frac{1-\alpha}{(1-\alpha)\mathbb E[b'] + \alpha\mathbb E[b'\pmb1\{b' > r\}]}\mathbb E[r\pmb1\{r\geq b'\}]}{\mathbb E[r]}\right\}\right)^+\\
    & \quad + O(\sqrt{T\log T}).
\end{split}
\end{equation*}

\else

\begin{equation*}
\begin{split}
    t_1 < T & \implies\sum_{t=1}^{t_1\wedge \tau}P[t] \geq \sum_{t=1}^{t_1\wedge\tau}X[t]\pmb1\{X[t]\geq X'[t]\}\\
    & \geq \min\{\alpha T,\alpha t_1\mathbb E[r\pmb1\{r\geq b'\}] - \bar b\sqrt{T\ln T}\}\\
    & \geq \alpha T.
\end{split}
\end{equation*}
By \eqref{eq:adversaryspendingbeforet2},
\begin{equation*}
\begin{split}
    \sum_{t=1}^{t_2\wedge\tau}P'[t] & \leq \sum_{t=1}^{t_2\wedge\tau}X'[t]\pmb1\{X[t] < X'[t]\}\\
    & < t_2((1-\alpha)\mathbb E[b'] + \alpha\mathbb E[b'\pmb1\{b' > r\}])\\
    & \quad +  \bar b\sqrt{T\ln T}\\
    & \leq (1-\alpha)T - \bar b.\\
\end{split}
\end{equation*}
By \eqref{eq:adversaryspendingbeforet3},
\begin{equation*}
\begin{split}
    t_3 < T & \implies \sum_{t=1}^{t_3\wedge\tau'}P'[t]\\
    & \geq \sum_{t=1}^{t_3\wedge\tau'}X'[t]\pmb1\{X[t] < X'[t]\}\\
    & \geq \min\Big\{(1-\alpha)T - \bar b,\\
    & \quad t_3((1-\alpha)\mathbb E[b'] + \alpha\mathbb E[b'\pmb1\{b'<r\}])\\
    & \quad\quad -  \bar b\sqrt{T\ln T}\Big\}\\
    & \geq (1-\alpha)T - \bar b.
\end{split}
\end{equation*}
By \eqref{eq:spendingbeforet2} and \eqref{eq:spendingbetweent3andt4},
\begin{equation*}
\begin{split}
    t_4 & < T \implies  \sum_{t=1}^{t_4}P[t]\geq \sum_{t=1}^{t_2\wedge\tau} P[t] + \sum_{t=t_3+1}^{t_4}P[t]\\
    & \geq \sum_{t=1}^{t_2\wedge\tau}X[s]\pmb1\{X[s]\geq X'[s]\} + \sum_{t=t_3+1}^{t_4\wedge\tau}X[s]\\
    & \geq\min\{\alpha T, \alpha t_2\mathbb E[r\pmb1\{r\geq b'\}] + \alpha(t_4 - t_3)\mathbb E[r]\}\\
    &  \geq \alpha T.
\end{split}
\end{equation*}
The above imply
\begin{equation}
    \tau \leq t_1\wedge t_4 \quad \text{and} \quad t_2\wedge\tau\leq \tau'\leq t_3.
\end{equation}
Then, using \eqref{eq:utilitybeforet1wedget2},
\begin{equation*}
\begin{split}
     \sum_{t=1}^{\tau\wedge\tau'} & U[t] \leq \sum_{t=1}^{t_1\wedge t_3}U[t]\\
     & = \sum_{t=1}^{t_1\wedge t_2}U[t] + \sum_{t=t_2+1}^{t_3}U[t]\\
     & \leq \sum_{t=1}^{t_1\wedge t_2}U[t] + O(\sqrt{T\log T})\\
     & = \sum_{t=1}^{t_1\wedge t_2} V[s]\pmb1\{X[s] \geq X'[s]\} + O(\sqrt{T\log T})\\
     & < \alpha\Pr(r \geq b')(t_1\wedge t_2) + O(\sqrt{T\log T})\\
     & \leq \alpha T\Pr(r \geq b')\min\left\{1, \frac{1}{\mathbb E[r\pmb1\{r\geq b'\}]},\right.\\
     & \left.\quad\frac{1-\alpha}{(1-\alpha)\mathbb E[b'] + \alpha\mathbb E[b'\pmb1\{b' > r\}}\right\}\\
     & \quad + O(\sqrt{T\log T})
\end{split}
\end{equation*}
and using \eqref{eq:utilitybetweent2andt4},
\begin{equation*}
\begin{split}
    & \sum_{t=\tau'+1}^\tau U[t] = \sum_{t=\tau'+1}^\tau V[t]\\
    & \quad \leq \sum_{t=t_2+1}^{t_4}V[t]\\
    & \quad < \alpha(t_4-t_2)^+ + O(\sqrt{T\log T})\\
    & \quad \leq \alpha T\left(\min\left\{1\right.\right.\\
    & \quad\left.\left.\quad - \frac{1-\alpha}{(1-\alpha)\mathbb E[b'] + \alpha\mathbb E[b'\pmb1\{b' > r\}]},\right.\right.\\
    & \quad\left.\left.\quad\frac{1 - \frac{1-\alpha}{(1-\alpha)\mathbb E[b'] + \alpha\mathbb E[b'\pmb1\{b' > r\}]}\mathbb E[r\pmb1\{r\geq b'\}]}{\mathbb E[r]}\right.\right.\\
    & 
    \left.\left.\quad\quad\right\}\right)^+ + O(\sqrt{T\log T}).
\end{split}
\end{equation*}

\fi 

The above gives us an upper bound of the agent's utility on the event $E$. Since $\Pr(E) \geq 1 - O(1/T)$, the above can be seen to imply the lemma statement using \eqref{eq:agentobjective} and Markov's inequality.  (Recall that we have Bernoulli valuations so the values are bounded by $1$ on the complement event $E^c$.)
\end{proof}

We can simplify \eqref{eq:agentmodifiedobjective} with a slightly higher upper bound for small $\alpha$ as follows. Suppose that we make the adversary pay their bid each round no matter who wins as opposed to only making them pay when they wins. The agent should get higher expected utility this way since the adversary is paying more. When the agent still has tokens, the adversary's expected payment goes from $(1-\alpha)\mathbb E[b'] + \alpha\mathbb E[b'\pmb1\{b'>r\}]$ to $\mathbb E[b']$. Doing this replacement in \eqref{eq:agentmodifiedobjective} gives us the following bound on $U(\alpha,\mathcal D,\mathcal D')$.
\begin{lemma}
\label{lem:simplifiedutilitymodifiedgame}
\ifarxiv 
\begin{equation*}
\begin{split}
    \frac{1}{\alpha T} U(\alpha, \mathcal D, \mathcal D') & \leq \Pr(r \geq b')\min\left\{1, \frac{1}{\mathbb E[r\pmb1\{r\geq b'\}]}, \frac{1-\alpha}{\mathbb E[b']}\right\}\\
    & \quad + \left(\min\left\{1-\frac{1-\alpha}{\mathbb E[b']}, \quad \frac{1-\frac{1-\alpha}{\mathbb E[b']}\mathbb E[r\pmb1\{r\geq b'\}]}{\mathbb E[r]}\right\}\right)^+
\end{split}
\end{equation*}
\else
\begin{equation*}
\begin{split}
    \frac{1}{\alpha T}& U(\alpha, \mathcal D, \mathcal D') \leq \Pr(r \geq b')\min\left\{1, \right.\\
    & \left.\frac{1}{\mathbb E[r\pmb1\{r\geq b'\}]}, \frac{1-\alpha}{\mathbb E[b']}\right\}\\
    & + \left(\min\left\{1-\frac{1-\alpha}{\mathbb E[b']},\right.\right.\\
    & \left.\left.\quad \frac{1-\frac{1-\alpha}{\mathbb E[b']}\mathbb E[r\pmb1\{r\geq b'\}]}{\mathbb E[r]}\right\}\right)^+
\end{split}
\end{equation*}
\fi 
\end{lemma}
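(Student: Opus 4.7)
The plan is to view both sides of the claimed inequality as a single function evaluated at two different arguments and then prove the lemma by monotonicity. Abbreviate $p := \Pr(r \geq b')$, $m := \mathbb E[r\,\pmb 1\{r\geq b'\}]$, and $\mu := \mathbb E[r]$, and define
\begin{equation*}
    f(z) = p\cdot\min\left\{1,\, \tfrac{1}{m},\, z\right\} + \Bigl(\min\left\{1-z,\, \tfrac{1-zm}{\mu}\right\}\Bigr)^{+}.
\end{equation*}
With $D := (1-\alpha)\mathbb E[b'] + \alpha\mathbb E[b'\pmb 1\{b' > r\}]$ and $B := \mathbb E[b']$, the definition \eqref{eq:agentmodifiedobjective} gives $\tfrac{1}{\alpha T}U(\alpha,\mathcal D,\mathcal D') = f((1-\alpha)/D)$, while the right-hand side of \cref{lem:simplifiedutilitymodifiedgame} is exactly $f((1-\alpha)/B)$. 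Since $\mathbb E[b'\pmb 1\{b' > r\}] \leq \mathbb E[b']$, we have $D \leq B$, and hence $(1-\alpha)/B \leq (1-\alpha)/D$. It therefore suffices to prove that $f$ is non-increasing on $[0,\infty)$.

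I would establish monotonicity by a case split on whether $z \leq \min\{1,1/m\}$. When $z > \min\{1,1/m\}$, a short check shows the first term saturates (at $p$ if $m\leq 1$, at $p/m$ if $m > 1$), while $\min\{1-z,(1-zm)/\mu\} \leq 0$, so the positive part vanishes and $f$ is constant. When $z \leq \min\{1,1/m\}$, both $1-z$ and $(1-zm)/\mu$ are non-negative, so $f(z) = pz + \min\{1-z,\,(1-zm)/\mu\}$; splitting further on which expression attains the inner $\min$ yields either $f(z) = 1 - (1-p)z$, which is non-increasing since $p\leq 1$, or $f(z) = 1/\mu + (p - m/\mu)z$, which is non-increasing precisely when $p\mu \leq m$.

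The crux of the argument is therefore the correlation inequality $p\mu \leq m$. Conditional on $b'$, both $r$ and $\pmb 1\{r \geq b'\}$ are non-decreasing functions of $r$, so Chebyshev's sum inequality (equivalently, the non-negativity of $\mathrm{Cov}(r,\pmb 1\{r\geq b'\}\mid b')$) gives
\begin{equation*}
    \mathbb E[r\,\pmb 1\{r\geq b'\}\mid b'] \;\geq\; \mathbb E[r\mid b']\,\Pr(r\geq b'\mid b') \;=\; \mu\,\Pr(r\geq b'\mid b'),
\end{equation*}
and taking expectations over $b'$ yields $m \geq \mu p$. The main obstacle I anticipate is not this inequality itself but the bookkeeping of the case analysis: one needs to verify that $f$ is continuous at the regime boundaries (at $z = 1$ and $z = 1/m$, and at the crossover of the two inner-$\min$ branches) so that the piecewise monotonicity implies global monotonicity. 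Once this is in place, evaluating $f$ at $(1-\alpha)/B \leq (1-\alpha)/D$ completes the proof of the lemma.
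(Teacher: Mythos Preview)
Your proposal is correct and takes essentially the same approach as the paper's proof: both write $\tfrac{1}{\alpha T}U(\alpha,\mathcal D,\mathcal D')$ as a function of the single variable $x=(1-\alpha)/D$, observe that the right-hand side is the same function evaluated at the smaller value $(1-\alpha)/B$, and then show this function is nonincreasing using the key inequality $\mathbb E[r]\Pr(r\geq b')\leq \mathbb E[r\pmb1\{r\geq b'\}]$. The paper argues nonincreasingness via the left-hand derivative (citing ``Markov's inequality and the independence of $r$ and $b'$'' for the key inequality), while you do an explicit case split and justify the same inequality via Chebyshev's sum inequality; these are cosmetic differences. Your worry about continuity at the regime boundaries is unnecessary: $f$ is a composition of $\min$, positive part, and affine functions, hence continuous everywhere, so piecewise monotonicity immediately gives global monotonicity.
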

\begin{proof}
Let
\begin{equation*}
    x = \frac{1-\alpha}{(1-\alpha)\mathbb E[b'] + \alpha\mathbb E[b'\pmb1\{b' > r\}]}.
\end{equation*}
Observe that if $\min\left\{1, \frac{1}{\mathbb E[r\pmb1\{r\geq b'\}]}\right\} \leq x$,
\ifarxiv 
\begin{equation}
\label{eq:xagentrunsoutofmoneyfirst}
\begin{split}
    \frac{1}{\alpha T}U(\alpha, &\mathcal D,\mathcal D') = \Pr(r\geq b')\min\left\{1, \frac{1}{\mathbb E[r\pmb1\{r\geq b'\}]}\right\},
\end{split}
\end{equation}
\else
\begin{equation}
\label{eq:xagentrunsoutofmoneyfirst}
\begin{split}
    \frac{1}{\alpha T}U(\alpha, &\mathcal D,\mathcal D')\\
     &= \Pr(r\geq b')\min\left\{1, \frac{1}{\mathbb E[r\pmb1\{r\geq b'\}]}\right\},
\end{split}
\end{equation}
\fi
and otherwise,
\ifarxiv 
\begin{equation}
\label{eq:xadversaryrunsoutofmoneyfirst}
\begin{split}
    \frac{1}{\alpha T}& U(\alpha,\mathcal D,\mathcal D') = x\Pr(r\geq b') + \min\left\{1-x, \frac{1-x\mathbb E[r\pmb1\{r\geq b'\}]}{\mathbb E[r]}\right\}.
\end{split}
\end{equation}
\else
\begin{equation}
\label{eq:xadversaryrunsoutofmoneyfirst}
\begin{split}
    \frac{1}{\alpha T}& U(\alpha,\mathcal D,\mathcal D') = x\Pr(r\geq b')\\
     & \quad+ \min\left\{1-x, \frac{1-x\mathbb E[r\pmb1\{r\geq b'\}]}{\mathbb E[r]}\right\}.
\end{split}
\end{equation}
\fi
With $\frac{1}{\alpha T}U(\alpha, \mathcal D,\mathcal D')$ expressed this way, the lemma states that we can replace $x$ with $\frac{1-\alpha}{\mathbb E[b']}$ and not make it any smaller. Since $x \geq \frac{1-\alpha}{\mathbb E[b']}$, it suffices to show that the expressed piecewise function of $x$ as given in \eqref{eq:xagentrunsoutofmoneyfirst} and \eqref{eq:xadversaryrunsoutofmoneyfirst} is nonincreasing, for which it suffices to show its left-hand derivative is always nonpositive. This left-hand derivative is $0$ if $\min\left\{1, \frac{1}{\mathbb E[r\pmb1\{r\geq b'\}]}\right\} \leq x$ and otherwise, it is at most
\begin{equation*}
    \Pr(r\geq b') + \max\left\{-1, -\frac{\mathbb E[r\pmb1\{r\geq b'\}}{\mathbb E[r]}\right\}
\end{equation*}
which is always nonpositive by Markov's inequality and the independence of $r$ and $b'$. 

\end{proof}

Let $F$ denote the CDF of $\mathcal D$. Let $F(x-)$ denote the left limit of $F$ at $x$. The below lemma comes from the fact that against adversaries that just bid a deterministic constant $x$ until they run out of tokens, the agent still needs to be bidding high enough to beat the adversary enough times in order not to run out of rounds to spend tokens in.

\cdfrestriction*

\begin{proof}
Set $\mathcal D'$ to be the deterministic constant $x$. By \cref{lem:simplifiedutilitymodifiedgame},
\ifarxiv 
\begin{equation*}
\begin{split}
    \frac{1}{\alpha T}U(\alpha, \mathcal D, \mathcal D') & \leq \frac{(1-\alpha)\Pr(r \geq x)}{x} + \left(1 - \frac{1-\alpha}{x}\right).
\end{split}
\end{equation*}
\else
\begin{equation*}
\begin{split}
    \frac{1}{\alpha T}U(\alpha, \mathcal D, \mathcal D') & \leq \frac{(1-\alpha)\Pr(r \geq x)}{x}\\
    & \quad + \left(1 - \frac{1-\alpha}{x}\right).
\end{split}
\end{equation*}
\fi
This is at least $\beta$, which when substituting $\Pr(r\geq x) = 1 - F(x-)$ and solving for $F(x-)$, implies the result.
\end{proof}

The below lemma comes from the adversary choosing a distribution $\mathcal D'$ to be essentially a constant to make the agent and adversary run out of tokens at the same time. Under these strategies, we bound the probability that the agent can win against the adversary when they are both bidding.
\begin{lemma}
\label{lem:fixeddistributionupperbound}
For any choice of distribution $\mathcal D$,
\begin{equation*}
    \inf_{0 < \alpha \leq 1}\frac{1}{\alpha T}\inf_{\mathcal D'}U(\alpha, \mathcal D, \mathcal D') \leq \frac35.
\end{equation*}
\end{lemma}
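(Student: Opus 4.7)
The plan is to argue by contradiction. Suppose that for some distribution $\mathcal D$ and some $\beta > 3/5$, the strategy were $\beta$-robust as $\alpha \to 0$, i.e., $\inf_{\mathcal D'} U(\alpha, \mathcal D, \mathcal D')/(\alpha T) \ge \beta + o(1)$. The witness $\mathcal D'$ I would use is a deterministic bid at the \emph{critical} value $b^*$ satisfying $\E[r\,\mathbf{1}\{r \ge b^*\}] = b^*$; this is the value at which the agent's and adversary's budget-depletion rates coincide (in the $\alpha \to 0$ limit, the adversary spends $b^*$ per round while the agent's expected per-round spending is $\E[r\,\mathbf{1}\{r \ge b^*\}]$, so both budgets are exhausted simultaneously).

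First I would show that the critical $b^* > 1$ exists. By \cref{lem:cdfrestriction}, the $\beta$-robustness hypothesis forces $F(x-) \le (1-\beta)x/(1-\alpha)$ for every $x \ge 1-\alpha$. Substituting this tail bound into the identity $\E[r\,\mathbf{1}\{r \ge 1\}] = \Pr(r \ge 1) + \int_1^\infty \Pr(r > t)\,dt$ and integrating over $[1, (1-\alpha)/(1-\beta)]$, a short calculation gives $\E[r\,\mathbf{1}\{r \ge 1\}] \ge \beta(2-\beta)/(2(1-\beta)) + o(1)$, which strictly exceeds $1$ whenever $\beta > 3/5$. Since $b \mapsto \E[r\,\mathbf{1}\{r \ge b\}]$ is continuous and nonincreasing, a critical $b^* > 1$ must exist.

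Second, I would apply \cref{lem:simplifiedutilitymodifiedgame} to $\mathcal D' = \delta_{b^*}$. Because $b^* > 1$ and $\E[r\,\mathbf{1}\{r \ge b^*\}] = b^*$, the first minimum in its formula reduces to $1/b^*$ and the second bracket is non-positive in the $\alpha \to 0$ limit, giving $U(\alpha,\mathcal D,\delta_{b^*})/(\alpha T) \le \Pr(r \ge b^*)/b^* + o(1)$. The $\beta$-robustness assumption then forces $\Pr(r \ge b^*) \ge \beta b^* - o(1)$. Third, I would lower bound $\E[r\,\mathbf{1}\{r \ge b^*\}]$ by integrating the CDF bound from $b^*$ to $(1-\alpha)/(1-\beta)$, obtaining
\[
    \E[r\,\mathbf{1}\{r \ge b^*\}] \ge b^* \Pr(r \ge b^*) + \frac{1}{2(1-\beta)} - b^* + \frac{(1-\beta)(b^*)^2}{2} + o(1).
\]
Setting the left-hand side equal to $b^*$ (the critical condition) and substituting $\Pr(r \ge b^*) \ge \beta b^*$, the inequality rearranges into the quadratic
\[
    (1-\beta^2)(b^*)^2 - 4(1-\beta)b^* + 1 \le 0,
\]
whose discriminant equals $4(1-\beta)(3-5\beta)$ and is nonnegative only for $\beta \le 3/5$. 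Hence no real $b^*$ satisfies the inequality when $\beta > 3/5$, contradicting the assumption and proving the lemma.

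The hardest part will be carefully tracking the $(1-\alpha)$ and $o(1)$ corrections through the three places they appear --- the bound of \cref{lem:cdfrestriction}, the application of \cref{lem:simplifiedutilitymodifiedgame}, and the two tail integrals --- so that in the limit $\alpha \to 0$ the quadratic inequality cleanly rules out $\beta > 3/5$. A secondary technicality is verifying that at $b^*$ we are indeed in the ``agent runs out first'' regime of \cref{lem:simplifiedutilitymodifiedgame} (equivalently, that the three terms in its first $\min$ are minimized by $1/b^*$ and that the positive part in its second term vanishes); this is precisely what the first-step lower bound on $\E[r\,\mathbf{1}\{r \ge 1\}]$ ensures.
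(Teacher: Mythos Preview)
Your approach is essentially the paper's: pick the adversary's bid at the critical level where the two spending rates coincide, invoke \cref{lem:cdfrestriction} to lower-bound the tail, integrate to lower-bound $\E[r\,\mathbf 1\{r\ge b^*\}]$, and extract a quadratic constraint. Your quadratic $(1-\beta^2)(b^*)^2-4(1-\beta)b^*+1\le 0$ is algebraically the same as the paper's quadratic in $\gamma=(1-\beta)/(1-\alpha)$ after the limits $\epsilon,\alpha\to 0$; the paper solves for the positive root $\gamma_+$ and maximizes $1-\gamma_+$ over $y$, whereas you show the discriminant $4(1-\beta)(3-5\beta)$ is negative for $\beta>3/5$, which is a slightly cleaner finish.

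There is one genuine gap. You assert that $b\mapsto \E[r\,\mathbf 1\{r\ge b\}]$ is continuous, but when $\mathcal D$ has atoms it is only left-continuous with downward jumps, so an exact $b^*$ with $\E[r\,\mathbf 1\{r\ge b^*\}]=b^*$ need not exist (e.g.\ if $\mathcal D$ puts mass $\tfrac12$ on each of $2$ and $3$, the function equals $\tfrac52$ on $(0,2]$ and $\tfrac32$ on $(2,3]$, crossing the diagonal only through a jump). This matters in your third step: you need the \emph{equality} to turn the lower bound on $\E[r\,\mathbf 1\{r\ge b^*\}]$ into an inequality for $b^*$; using $b^*=\sup\{b:g(b)\ge b\}$ and left-continuity gives $g(b^*)\ge b^*$, which points the wrong way. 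The paper handles exactly this by taking $\mathcal D'_\epsilon$ to be a \emph{mixture} of point masses at $y$ and $y+\epsilon$ with the weight $p$ chosen so that $\E[r\,\mathbf 1\{r\ge b'_\epsilon\}]=\E[b'_\epsilon]/(1-\alpha)$ holds exactly, and then sends $\epsilon\to 0$ at the end. You will need this device (or an equivalent limiting argument using $b^*+\epsilon$) to close the proof; it is not subsumed by the $(1-\alpha)$ and $o(1)$ bookkeeping you flag.
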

\begin{proof}
Fix $\mathcal D$. Let $\epsilon > 0$. We will give a distribution $\mathcal D_\epsilon'$ to further upper bound the bound in  \cref{lem:simplifiedutilitymodifiedgame}. Let $r\sim\mathcal D$. Let
\begin{equation*}
     y = \sup\left\{x:\mathbb E[r\pmb1\{r\geq x\}]- \frac{x}{1-\alpha}\geq 0\right\}.
\end{equation*}
Observe that $x\mapsto \mathbb E[r\pmb1\{r\geq x\}]- \frac{x}{1-\alpha}$ is nonincreasing and left-continuous, so that $\mathbb E[r\geq\pmb1\{r\geq y\}]-\frac{y}{1-\alpha}\geq 0$ and $\mathbb E[r\pmb1\{r\geq y+\epsilon\}]-\frac{y+\epsilon}{1-\alpha}<0$. If $y \geq 1-\alpha$, set $\mathcal D_\epsilon'$ to be the distribution that is $y$ with probability $p$ and $y+\epsilon$ with probability $1-p$ where $p$ is set such that if $b_\epsilon'\sim\mathcal D_\epsilon'$, then $\frac{\mathbb E[b_\epsilon']}{1-\alpha} = \mathbb E[r\geq\pmb1\{r\geq b_\epsilon'\}]$. Otherwise, if $y < 1-\alpha$, set $\mathcal D_\epsilon'$ to be the deterministic constant $1-\alpha$. In this case, $\mathbb E\left[r\pmb1\left\{r\geq 1-\alpha\right\}\right] < 1$, but by changing $\mathcal D$ by increasing $r\sim \mathcal D$ appropriately, we can make $\mathbb E\left[r\pmb1\left\{r\geq 1-\alpha\right\}\right] = 1$ while not decreasing $\Pr(r\geq b_\epsilon')$, which does not decrease the bound of $U(\alpha, \mathcal D, \mathcal D_\epsilon')$ in \cref{lem:simplifiedutilitymodifiedgame}. Without loss of generality, assume that we have modified $\mathcal D$ in this way so that $\mathbb E\left[r\pmb1\left\{r\geq 1-\alpha\right\}\right] = 1$. Then, in all cases, $y\geq 1-\alpha$, and $\mathcal D_\epsilon'$ is the distribution that is $y$ with some probability $p$ and $y+\epsilon$ with probability $1-p$ such that $\frac{\mathbb E[b_\epsilon']}{1-\alpha} = \mathbb E[r\pmb1\{r\geq y\}]$.

 Let $\beta_\epsilon(\alpha) = \frac{1}{\alpha T}U(\alpha, \mathcal D, \mathcal D_\epsilon')$. By \cref{lem:simplifiedutilitymodifiedgame},
\begin{equation*}
    \beta_\epsilon(\alpha) \leq \frac{(1-\alpha)\Pr(r\geq b_\epsilon')}{\mathbb E[b_\epsilon']}.
\end{equation*}
Let $\gamma_\epsilon(\alpha) = \frac{1-\beta_\epsilon(\alpha)}{1-\alpha}$. By \cref{lem:cdfrestriction},
\begin{equation*}
    \Pr(r\geq x) \geq 1-\gamma_\epsilon(\alpha)x
\end{equation*}
for $1-\alpha \leq x \leq \frac{1}{\gamma_\epsilon(\alpha)}$. We calculate
\ifarxiv 
\begin{equation*}
\begin{split}
    \mathbb E & [r\pmb1\{r\geq b_\epsilon'\}] = \mathbb E[\mathbb E[r\pmb1\{r\geq b_\epsilon'\}]\mid b_\epsilon']\\
    & = \mathbb E\left[\mathbb E\left[b_\epsilon'\Pr(r\geq b_\epsilon') + \int_{b_\epsilon'}^\infty \Pr(r\geq x)\,dx\,\middle|\, b_\epsilon'\right]\right]\\
    & \leq \mathbb E\bigg[\mathbb E\bigg[b_\epsilon'\Pr(r\geq b_\epsilon') + \int_{b_\epsilon'}^{1/\gamma_\epsilon(\alpha)}(1 - \gamma_\epsilon(\alpha) x)\,dx \,\bigg|\, b_\epsilon'\bigg]\bigg]\\
    & = \mathbb E[b_\epsilon'\Pr(r\geq b_\epsilon'\mid b_\epsilon')] + \mathbb E\left[\frac{(1 - \gamma_\epsilon(\alpha) b_\epsilon')^2}{2\gamma_\epsilon(\alpha)}\right]\\
    & \leq (y+\epsilon)\Pr(r\geq b_\epsilon') + \frac{(1 - \gamma_\epsilon(\alpha) y)^2}{2\gamma_\epsilon(\alpha)}\\
    & \leq (y+\epsilon)\frac{\beta_\epsilon(\alpha)\mathbb E[b_\epsilon']}{1-\alpha} + \frac{(1 - \gamma_\epsilon(\alpha) y)^2}{2\gamma_\epsilon(\alpha)}\\
    & \leq (y+\epsilon)^2\frac{\beta_\epsilon(\alpha)}{1-\alpha} + \frac{(1 - \gamma_\epsilon(\alpha) y)^2}{2\gamma_\epsilon(\alpha)}\\
    & = (y+\epsilon)^2\left(\frac{1}{1-\alpha} - \gamma_\epsilon(\alpha)\right) + \frac{(1 - \gamma_\epsilon(\alpha) y)^2}{2\gamma_\epsilon(\alpha)}.
\end{split}
\end{equation*}
\else
\begin{equation*}
\begin{split}
    \mathbb E & [r\pmb1\{r\geq b_\epsilon'\}] = \mathbb E[\mathbb E[r\pmb1\{r\geq b_\epsilon'\}]\mid b_\epsilon']\\
    & = \mathbb E\left[\mathbb E\left[b_\epsilon'\Pr(r\geq b_\epsilon') + \int_{b_\epsilon'}^\infty \Pr(r\geq x)\,dx\,\middle|\, b_\epsilon'\right]\right]\\
    & \leq \mathbb E\bigg[\mathbb E\bigg[b_\epsilon'\Pr(r\geq b_\epsilon')\\
    & \quad + \int_{b_\epsilon'}^{1/\gamma_\epsilon(\alpha)}(1 - \gamma_\epsilon(\alpha) x)\,dx \,\bigg|\, b_\epsilon'\bigg]\bigg]\\
    & = \mathbb E[b_\epsilon'\Pr(r\geq b_\epsilon'\mid b_\epsilon')] + \mathbb E\left[\frac{(1 - \gamma_\epsilon(\alpha) b_\epsilon')^2}{2\gamma_\epsilon(\alpha)}\right]\\
    & \leq (y+\epsilon)\Pr(r\geq b_\epsilon') + \frac{(1 - \gamma_\epsilon(\alpha) y)^2}{2\gamma_\epsilon(\alpha)}\\
    & \leq (y+\epsilon)\frac{\beta_\epsilon(\alpha)\mathbb E[b_\epsilon']}{1-\alpha} + \frac{(1 - \gamma_\epsilon(\alpha) y)^2}{2\gamma_\epsilon(\alpha)}\\
    & \leq (y+\epsilon)^2\frac{\beta_\epsilon(\alpha)}{1-\alpha} + \frac{(1 - \gamma_\epsilon(\alpha) y)^2}{2\gamma_\epsilon(\alpha)}\\
    & = (y+\epsilon)^2\left(\frac{1}{1-\alpha} - \gamma_\epsilon(\alpha)\right) + \frac{(1 - \gamma_\epsilon(\alpha) y)^2}{2\gamma_\epsilon(\alpha)}.
\end{split}
\end{equation*}
\fi
Since $\mathbb E[r\pmb1\{r\geq b_\epsilon'\}] \geq \frac{y}{1-\alpha}$, we obtain
\ifarxiv 
\begin{equation*}
\begin{split}
    \frac{y}{1-\alpha} \geq (y+\epsilon)^2\left(\frac{1}{1-\alpha} - \gamma_\epsilon(\alpha)\right) + \frac{(1 - \gamma_\epsilon(\alpha) y)^2}{2\gamma_\epsilon(\alpha)}.
\end{split}
\end{equation*}
\else
\begin{equation*}
\begin{split}
    \frac{y}{1-\alpha} \geq (y+\epsilon)^2\left(\frac{1}{1-\alpha} - \gamma_\epsilon(\alpha)\right)\\
    + \frac{(1 - \gamma_\epsilon(\alpha) y)^2}{2\gamma_\epsilon(\alpha)}.
\end{split}
\end{equation*}
\fi
Solving for $\gamma_\epsilon(\alpha)$ above, we obtain
\ifarxiv 
\begin{equation}
\begin{split}
\label{eq:quadraticingamma}
    (2(y+ & \epsilon)^2 - y^2)\gamma_\epsilon(\alpha)^2 + \frac{2}{1-\alpha}\left(y(2-\alpha) - (y+\epsilon)^2)\right)\gamma_\epsilon(\alpha) - 1 \geq 0.
\end{split}
\end{equation}
\else
\begin{equation}
\begin{split}
\label{eq:quadraticingamma}
    (2(y+ & \epsilon)^2 - y^2)\gamma_\epsilon(\alpha)^2\\
    &\quad+ \frac{2}{1-\alpha}\left(y(2-\alpha) - (y+\epsilon)^2)\right)\gamma_\epsilon(\alpha) - 1\\
    & \quad\quad\quad \geq 0.
\end{split}
\end{equation}
\fi
By Vieta's formulas, the quadratic in $\gamma_\epsilon(\alpha)$ on the left-hand side has a negative root $\gamma_-(\alpha, y,\epsilon)$ and a positive root $\gamma_+(\alpha, y,\epsilon)$. By \eqref{eq:quadraticingamma},
\begin{equation*}
    \gamma_\epsilon(\alpha) \geq \gamma_+(\alpha, y,\epsilon).
\end{equation*}
Using the quadratic formula and some calculus, it can be checked that
\ifarxiv 
\begin{equation*}
\begin{split}
    \inf_{\substack{\epsilon >0\\0 < \alpha\leq 1}} \beta_\epsilon(\alpha) & = \inf_{\substack{\epsilon >0\\0 < \alpha\leq 1}}\left[1-(1-\alpha)\gamma_\epsilon(\alpha)\right]\\
    & \leq \inf_{0< \alpha\leq 1}\left[1-(1-\alpha)\sup_{\epsilon>0}\inf_{y\geq 1-\alpha}\gamma_+(\alpha, y,\epsilon)\right]\\
    & = \sup_{y\geq 1}\lim_{\alpha\to0}\lim_{\epsilon\to0}[1-(1-\alpha)\gamma_+(\alpha, y,\epsilon)]\\
    & = \sup_{y\geq 1}\left[\frac{2-\sqrt{y^2-4y+5}}{y}\right]\\
    & = \frac35.
\end{split}
\end{equation*}
\else
\begin{equation*}
\begin{split}
    \inf_{\substack{\epsilon >0\\0 < \alpha\leq 1}} & \beta_\epsilon(\alpha) = \inf_{\substack{\epsilon >0\\0 < \alpha\leq 1}}\left[1-(1-\alpha)\gamma_\epsilon(\alpha)\right]\\
    & \leq \inf_{0< \alpha\leq 1}\left[1-(1-\alpha)\sup_{\epsilon>0}\inf_{y\geq 1-\alpha}\gamma_+(\alpha, y,\epsilon)\right]\\
    & = \sup_{y\geq 1}\lim_{\alpha\to0}\lim_{\epsilon\to0}[1-(1-\alpha)\gamma_+(\alpha, y,\epsilon)]\\
    & = \sup_{y\geq 1}\left[\frac{2-\sqrt{y^2-4y+5}}{y}\right]\\
    & = \frac35.
\end{split}
\end{equation*}
\fi
\end{proof}

Combining  \cref{lem:agentmodifiedobjective,lem:fixeddistributionupperbound}, we get the proof of \cref{thm:fixeddistributionupperbound}.

\subsection{Deferred proof of Theorem \ref{thm:arbitrarystrategyupperbound}}

In this section we prove the theorem of \cref{ssec:upper:arbitrary_distr}, which we first restate for completeness.

\arbitrarystrategyupperbound*

\begin{proof}
For ease of exposition, we assume the agent wins in case of ties in bidding.
The adversary uses the following bid distribution
\ifarxiv 
\begin{equation*}
\begin{split}
    \Pr_{X'\sim\mathcal D'} & (X'\leq x) = \begin{cases}0 & \text{if $x < 0$}\\\frac{1-\alpha-\delta}{e-(e-2)x} + \alpha + \delta & \text{if $0\leq x \leq \frac{e-1}{e-2}$}\\1 & \text{if $x > \frac{e-1}{e-2}$}\end{cases}.
\end{split}
\end{equation*}
\else
\begin{equation*}
\begin{split}
    \Pr_{X'\sim\mathcal D'} & (X'\leq x)\\
    & = \begin{cases}0 & \text{if $x < 0$}\\\frac{1-\alpha-\delta}{e-(e-2)x} + \alpha + \delta & \text{if $0\leq x \leq \frac{e-1}{e-2}$}\\1 & \text{if $x > \frac{e-1}{e-2}$}\end{cases}.
\end{split}
\end{equation*}
\fi
with $\delta = \sqrt{\frac{3\ln T}{T}}$.
In particular, the adversary bids according to this distribution every round until they have less than $\frac{e-1}{e-2}$ tokens left.

Let $X'[1], X'[2], \dots, X'[T]$ be i.i.d. drawn from $\mathcal D'$ such that $b'[t] = X'[t]\pmb1\left\{B'[t] \geq \frac{e-1}{e-2}\right\}$.  It is easily computed that $\mathbb E[X'[t]] = 1-\alpha-\delta$. By the Chernoff bound,

\ifarxiv 

\begin{equation}
\label{eq:chernoffupperbound}
\begin{split}
    \Pr\left(\sum_{t=1}^T X'[t] \geq (1+\delta)(1-\alpha-\delta)T\right)
    \leq \exp\left(-\frac{\delta^2(1-\alpha-\delta)T}{2+\delta}\right).
\end{split}
\end{equation}
Let $\mathcal H_t$ denote the history up to, and including, time $t$. Let $\mathcal G_t$ be the $\sigma$-algebra generated by $\mathcal H_t$, $V[t+1]$, and $b[t+1]$. Define the $\mathcal G_t$-martingales
\begin{equation*}
\begin{split}
    M_1[t] & = \sum_{s=1}^t b[s]\pmb1\{b[s]\geq X'[s]\} - \sum_{s=1}^t\mathbb E[b[s]\pmb1\{b[s]\geq X'[s]\}\mid b[s]]
\end{split}
\end{equation*}
and
\begin{equation*}
\begin{split}
    M_2[t] & = \sum_{s=1}^t V[s]\pmb1\{b[s]\geq X'[s]\} - \sum_{s=1}^tV[s]\mathbb E[\pmb1\{b[s]\geq X'[s]\}\mid b[s]].
\end{split}
\end{equation*}
Let $\epsilon > 0$. By the Azuma-Hoeffding inequality,
\begin{equation}
\begin{split}
\label{eq:azumaupperbound}
    & \Pr\left(M_1[T] \leq \frac{e-2}{e-1}\epsilon, M_2[T] \geq \epsilon,\sum_{s=1}^T V[s] \geq \alpha T + \epsilon \right)\leq 3\exp\left(-\frac{2\epsilon^2}{T}\right).
\end{split}
\end{equation}

Let $E$ be the high probability event that the events in \eqref{eq:chernoffupperbound} and \eqref{eq:azumaupperbound} do not occur,
\begin{equation}
\label{eq:probabilitygoodeventuniformupperbound}
\begin{split}
    &\Pr(E) \geq 1 - 3\exp\left(-\frac{2\epsilon^2}{T}\right) - \exp\left(-\frac{\delta^2(1-\alpha-\delta)T}{2+\delta}\right).
\end{split}
\end{equation}

\else

\begin{equation}
\label{eq:chernoffupperbound}
\begin{split}
    \Pr\left(\sum_{t=1}^T X'[t] \geq (1+\delta)(1-\alpha-\delta)T\right)\\
    \leq \exp\left(-\frac{\delta^2(1-\alpha-\delta)T}{2+\delta}\right).
\end{split}
\end{equation}
Let $\mathcal H_t$ denote the history up to, and including, time $t$. Let $\mathcal G_t$ be the $\sigma$-algebra generated by $\mathcal H_t$, $V[t+1]$, and $b[t+1]$. Define the $\mathcal G_t$-martingales
\begin{equation*}
\begin{split}
    M_1[t] & = \sum_{s=1}^t b[s]\pmb1\{b[s]\geq X'[s]\}\\
    & \quad - \sum_{s=1}^t\mathbb E[b[s]\pmb1\{b[s]\geq X'[s]\}\mid b[s]]
\end{split}
\end{equation*}
and
\begin{equation*}
\begin{split}
    M_2[t] & = \sum_{s=1}^t V[s]\pmb1\{b[s]\geq X'[s]\}\\
    & \quad - \sum_{s=1}^tV[s]\mathbb E[\pmb1\{b[s]\geq X'[s]\}\mid b[s]].
\end{split}
\end{equation*}
Let $\epsilon > 0$. By the Azuma-Hoeffding inequality,
\begin{equation}
\begin{split}
\label{eq:azumaupperbound}
    & \Pr\left(M_1[T] \leq \frac{e-2}{e-1}\epsilon, M_2[T] \geq \epsilon,\right.\\
    & \left.\quad \sum_{s=1}^T V[s] \geq \alpha T + \epsilon \right)\leq 3\exp\left(-\frac{2\epsilon^2}{T}\right).
\end{split}
\end{equation}

Let $E$ be the high probability event that the events in \eqref{eq:chernoffupperbound} and \eqref{eq:azumaupperbound} do not occur,
\begin{equation}
\label{eq:probabilitygoodeventuniformupperbound}
\begin{split}
    &\Pr(E)\\
    &\geq 1 - 3\exp\left(-\frac{2\epsilon^2}{T}\right) - \exp\left(-\frac{\delta^2(1-\alpha-\delta)T}{2+\delta}\right).
\end{split}
\end{equation}

\fi

Consider what happens on $E$. The adversary never runs out of tokens:
\begin{equation*}
\begin{split}
    \sum_{t=1}^T P'[t] & \leq \sum_{t=1}^T X'[t]\\
    & < (1+\delta)(1-\alpha-\delta)T\\
    & = (1-\alpha)T - \delta(\alpha + \delta)T\\
    & \leq (1-\alpha)T - \frac{e-1}{e-2},
\end{split}
\end{equation*}
for a sufficiently large value of $\delta T$. Hence, we have $P[s] = b[s]\pmb1\{b[s]\geq X'[s]\}$ and $U[s] = V[s]\pmb1\{b[s]\geq X'[s]\}$. The sums of these quantities satisfy, assuming $0\leq b[s] \leq \frac{e-1}{e-2}$ without loss of generality,
\ifarxiv 
\begin{equation*}
\begin{split}
    \sum_{s=1}^T P[s] & = \sum_{s=1}^T \mathbb E [b[s]\pmb1\{b[s]\geq X'[s]\}\mid b[s]]+ M_1[T]\\
    & \geq \sum_{s=1}^T b[s]\left(\frac{1-\alpha-\delta}{e-(e-2)b[s]}+\alpha+\delta\right) - \frac{e-1}{e-2}\epsilon
\end{split}
\end{equation*}
and
\begin{equation*}
\begin{split}
    \sum_{s=1}^T U[s] & = \sum_{s=1}^T V[s]\mathbb E[\pmb1\{b[s]\geq X'[s]\}\mid b[s]] + M_2[T]\\
    & \leq \sum_{s=1}^T V[s]\left(\frac{1-\alpha-\delta}{e-(e-2)b[s]}+\alpha + \delta\right) + \epsilon.
\end{split}
\end{equation*}

\else

\begin{equation*}
\begin{split}
    \sum_{s=1}^T P[s] & = \sum_{s=1}^T \mathbb E [b[s]\pmb1\{b[s]\geq X'[s]\}\mid b[s]]\\
    & \quad + M_1[T]\\
    & \geq \sum_{s=1}^T b[s]\left(\frac{1-\alpha-\delta}{e-(e-2)b[s]}+\alpha+\delta\right)\\
    & \quad - \frac{e-1}{e-2}\epsilon
\end{split}
\end{equation*}
and
\begin{equation*}
\begin{split}
    \sum_{s=1}^T U[s] & = \sum_{s=1}^T V[s]\mathbb E[\pmb1\{b[s]\geq X'[s]\}\mid b[s]]\\
    & \quad + M_2[T]\\
    & \leq \sum_{s=1}^T V[s]\left(\frac{1-\alpha-\delta}{e-(e-2)b[s]}+\alpha + \delta\right) + \epsilon.
\end{split}
\end{equation*}

\fi
Since $\sum_{s=1}^T P[s]\leq \alpha T$, the total utility is bounded by the value of the following maximization problem.
\ifarxiv 
\begin{align*}
    \max_{(b[s])_{s=1}^T}&\sum_{s=1}^T V[s]\left(\frac{1-\alpha-\delta}{e-(e-2)b[s]}+\alpha + \delta\right) + \epsilon \nonumber\\
    \text{s.t.}\quad&\sum_{s=1}^T b[s]\left(\frac{1-\alpha-\delta}{e-(e-2)b[s]}+\alpha+\delta\right) - \frac{e-1}{e-2}\epsilon\leq \alpha T \nonumber\\
    & 0 \leq b[s] \leq \frac{e-1}{e-2} & \forall s \nonumber
\end{align*}
\else
\begin{align*}
    \max_{(b[s])_{s=1}^T}&\sum_{s=1}^T V[s]\left(\frac{1-\alpha-\delta}{e-(e-2)b[s]}+\alpha + \delta\right) + \epsilon \nonumber\\
    \text{s.t.}\quad&\sum_{s=1}^T b[s]\left(\frac{1-\alpha-\delta}{e-(e-2)b[s]}+\alpha+\delta\right)\\
    & \quad\quad\quad\quad\quad\quad\quad\quad\quad\quad- \frac{e-1}{e-2}\epsilon\leq \alpha T \nonumber\\
    & 0 \leq b[s] \leq \frac{e-1}{e-2} & \forall s \nonumber
\end{align*}
\fi
Let $(b^\star[s])_{s=1}^T$ be an optimal solution to the above maximization problem. Observe that the objective function is convex and the feasible solution set is also convex, so we can assume $(b^\star[s])_{s=1}^T$ lies on an extreme point of the feasible set. Furthermore, $b[s]\left(\frac{1-\alpha-\delta}{e-(e-2)b[s]}+\alpha+\delta\right)$ is a nondecreasing function of $b[s]$, so $(b^\star[s])_{s=1}^T$ lying on an extreme point implies that all but one $b^\star[s]$ is either $0$ or $\frac{e-1}{e-2}$. Let $S_0 = \{s\in [T]: b^\star[s]=0\}$ and $S_{(e-1)/(e-2)} = \left\{s\in [T]:b^\star[s]=\frac{e-1}{e-2}\right\}$. By feasibility,
\ifarxiv 
\begin{equation*}
\begin{split}
    \sum_{s\in S_{(e-1)/(e-2)}} b[s]\left(\frac{1-\alpha-\delta}{e-(e-2)b[s]}+\alpha+\delta\right) = |S_{(e-1)/(e-2)}|\frac{e-1}{e-2}\leq \frac{e-1}{e-2}\epsilon + \alpha T.
\end{split}
\end{equation*}
\else
\begin{equation*}
\begin{split}
    \sum_{s\in S_{(e-1)/(e-2)}} b[s]\left(\frac{1-\alpha-\delta}{e-(e-2)b[s]}+\alpha+\delta\right)\\
    = |S_{(e-1)/(e-2)}|\frac{e-1}{e-2}\leq \frac{e-1}{e-2}\epsilon + \alpha T.
\end{split}
\end{equation*}
\fi
We obtain the bound
\begin{equation}
\label{eq:maxbidbound}
    |S_{(e-1)/(e-2)}|\leq \frac{e-2}{e-1}\alpha T + \epsilon.
\end{equation}
The objective function can bounded by
\ifarxiv 
\begin{equation*}
\begin{split}
    & \sum_{s\in S_0}V[s]\left(\frac{1-\alpha-\delta}{e-(e-2)b[s]}+\alpha + \delta\right) + \sum_{s\in S_{(e-1)/(e-2)}}\left(\frac{1-\alpha-\delta}{e-(e-2)b[s]}+\alpha + \delta\right)\\
    & \quad + \sum_{s\notin S_0\cup S_{(e-1)/(e-2)}}\left(\frac{1-\alpha-\delta}{e-(e-2)b[s]}+\alpha + \delta\right) + \epsilon\\
    & \leq \left(\sum_{s=1}^TV[s] - |S_{(e-1)/(e-2)}|\right)\left(\frac{1-\alpha-\delta}{e}+\alpha+\delta\right) + (|S|_{(e-1)/(e-2)} + 1) + \epsilon\\
    & \leq \left(\alpha T + \epsilon - |S_{(e-1)/(e-2)}|\right)\left(\frac{1-\alpha-\delta}{e}+\alpha+\delta\right) + (|S_{(e-1)/(e-2)}| + 1) + \epsilon.
\end{split}
\end{equation*}
\else
\begin{equation*}
\begin{split}
    & \sum_{s\in S_0}V[s]\left(\frac{1-\alpha-\delta}{e-(e-2)b[s]}+\alpha + \delta\right)\\
    & \quad + \sum_{s\in S_{(e-1)/(e-2)}}\left(\frac{1-\alpha-\delta}{e-(e-2)b[s]}+\alpha + \delta\right)\\
    & \quad + \sum_{s\notin S_0\cup S_{(e-1)/(e-2)}}\left(\frac{1-\alpha-\delta}{e-(e-2)b[s]}+\alpha + \delta\right)\\
    & \quad + \epsilon\\
    & \leq \left(\sum_{s=1}^TV[s] - |S_{(e-1)/(e-2)}|\right)\\
    & \quad\quad\left(\frac{1-\alpha-\delta}{e}+\alpha+\delta\right)\\
    & \quad + (|S|_{(e-1)/(e-2)} + 1) + \epsilon\\
    & \leq \left(\alpha T + \epsilon - |S_{(e-1)/(e-2)}|\right)\left(\frac{1-\alpha-\delta}{e}+\alpha+\delta\right)\\
    & \quad + (|S_{(e-1)/(e-2)}| + 1) + \epsilon.
\end{split}
\end{equation*}
\fi
By substituting \eqref{eq:maxbidbound} into the above, we obtain the following bound on the total utility on the event $E$:
\begin{equation*}
    1 + \frac{\alpha T(e - 1 + \alpha + \delta)}{e} + 2\epsilon.
\end{equation*}
The expected total utility is upper bounded by
\ifarxiv 
\begin{equation*}
\begin{split}
    \sum_{t=1}^T \mathbb E[U[t]] \leq & \left(1 + \frac{\alpha T(e - 1 + \alpha + \delta)}{e} + 2\epsilon\right)\Pr(E) + T(1-\Pr(E)).
\end{split}
\end{equation*}
\else
\begin{equation*}
\begin{split}
    \sum_{t=1}^T \mathbb E[U[t]] \leq & \left(1 + \frac{\alpha T(e - 1 + \alpha + \delta)}{e} + 2\epsilon\right)\Pr(E)\\
    & + T(1-\Pr(E)).
\end{split}
\end{equation*}
\fi
Using the above, by setting $\delta = \sqrt{\frac{3\ln T}{T}}$ and $\epsilon = \sqrt{T\ln T}$, and substituting \eqref{eq:probabilitygoodeventuniformupperbound}, we obtain the theorem statement.
\end{proof}
\section{Experimental Evaluation of \texorpdfstring{\RRB}{Randomized Robust Bidding}}
\label{sec:experiments}

In our theoretical results, we investigated the worst-case utility guarantees that we could obtain under arbitrary behavior by the other agents, which includes adversarial (and collusive) behavior that may not be realistic.
In this section, we experimentally investigate the fraction of ideal utility an agent gets when all agents use robust strategies and show that our proposed strategy performs very well.
Specifically, we compare the agents' utilities under the following two strategies. First, all agents use the deterministic $(1/2-o(1))$-robust strategy given by \cite{gorokh2019remarkable}, where each agent bids $2$ each time their value is in the top $\alpha_i$-quantile of their value distribution. Second, all agents use our \strategy strategy, where each agent bids according to a uniform distribution instead.
To also illustrate our theoretical results where the other agents are behaving adversarially, we also run an experiment where one agent is using \strategy but the other agents adversarially always bid $1$ regardless of their values.

\begin{figure}[t!]
\centering
\includegraphics[width=.6\linewidth]{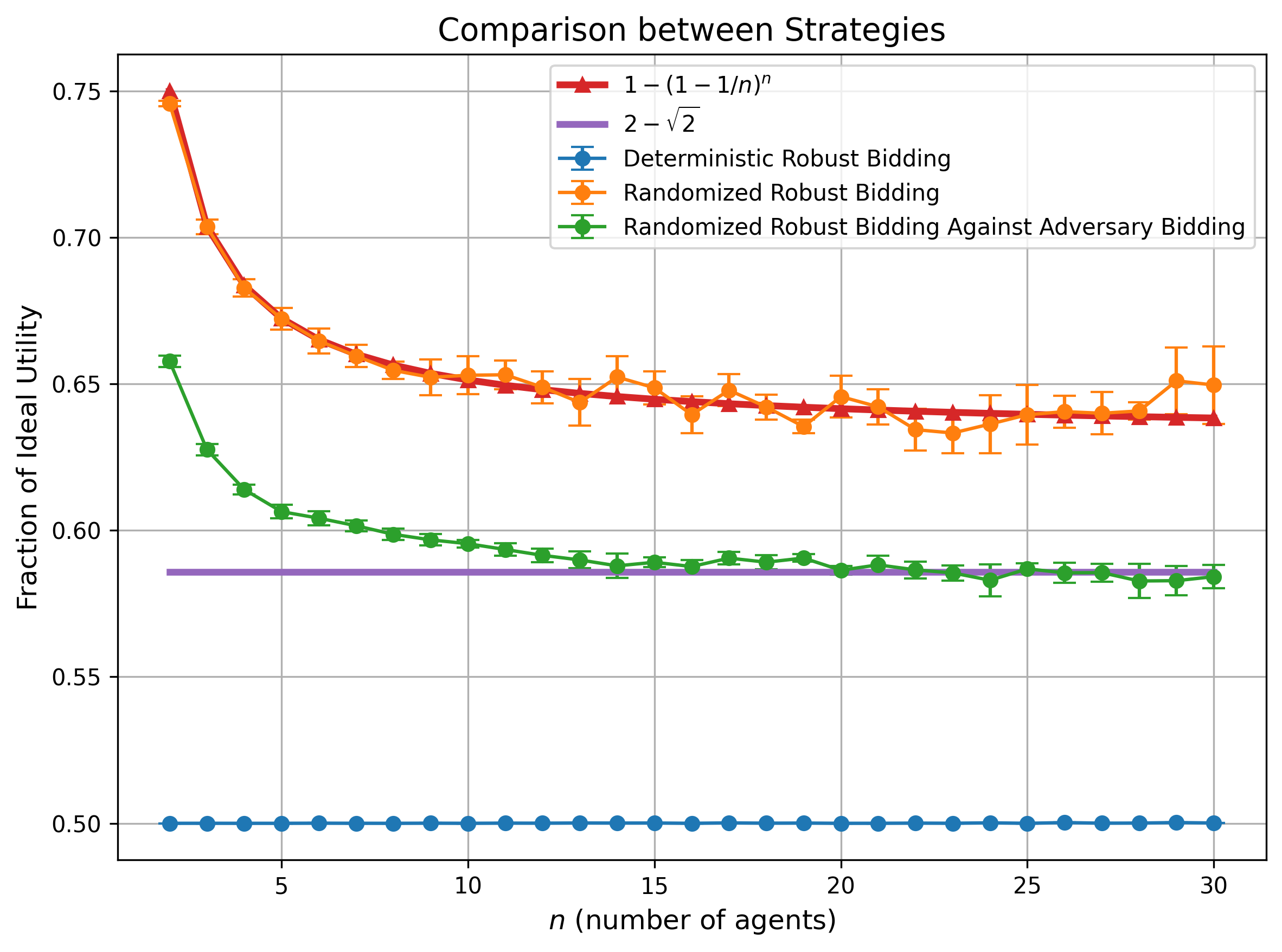}
\caption{Fraction of ideal utility that an agent obtains under differing strategy profiles. We compare the agents' utility when they all use the previously best-known robust strategy from \protect\cite{gorokh2019remarkable}, labeled Deterministic Robust Bidding, with the agents' utility when they all use \strategy. 
We also plot an agent's utility when they use \strategy but the other agents adversarially always bid $1$ regardless of their values, labeled Randomized Robust Bidding against Adversary.
When all agents use \strategy, they achieve $\approx1 - (1-1/n)^n$ fraction of their ideal utility, the theoretical maximum for any allocation procedure.
When one agent uses \strategy but the other agents behave adversarially, the agent using \strategy achieves at least a $2-\sqrt2$ fraction, the guarantee of \cref{thm:lowerBound}.}
\label{fig:comparison-experiment}
\end{figure}

We consider the symmetric agent case, where each agent has fair share $\alpha_i=1/n$.
We consider each agent's value distribution to be $\Bern(1/n)$.
For each strategy, we compare the agents' resulting utility for each number of players $n\in\{2,3,\dots,30\}$.
We ran the mechanism for $T=100000$ time periods $10$ times and recorded the average fraction of ideal utility that a particular agent obtained\footnote{Our code can be found at \url{https://github.com/davidxlin/repeated-fisher-market-experiments}}.
We plot our results in \cref{fig:comparison-experiment}.

We can see that when every agent uses the deterministic strategy, each agent gets $1/2$ of their ideal utility, similar to the theoretical guarantee.
When each player plays our randomized strategy, they enjoy a higher utility. In particular, they achieve close to a $1 - (1 - 1/n)^n$ fraction of their ideal utility.
This is the best we can hope for: no allocation procedure can guarantee each agent a greater fraction of their ideal utility when they have the aforementioned Bernoulli values.
This is superior to our theoretical lower bound of $2-\sqrt2\approx 0.59$ in \cref{thm:lowerBound} and shows the empirical performance of \RRB can be even greater than the one under worst-case competition.

\clearpage
\bibliographystyle{plain}
\bibliography{ijcai25}

\clearpage
\appendix

\end{document}